\newtheorem{theorem}{Theorem}
\newtheorem{lemma}[theorem]{Lemma}
\def\Real{{\mathbb R}}
\def\innerprod(#1,#2){{\left<#1\,,\,#2\right>}}
\def\Set#1{{\left\{#1\right\}}}
\def\qquadtext#1{\qquad\textup{#1}\qquad}
\def\qquadand{\qquadtext{and}}
\def\quadtext#1{\quad\textup{#1}\quad}
\def\quadand{\quadtext{and}}
\def\pfrac#1#2{\frac{\partial #1}{\partial #2}}
\def\ppfrac#1#2{\frac{\partial^2 #1}{\partial {#2}^2}}
\def\pqfrac#1#2#3{\frac{\partial^2 #1}{\partial {#2}\partial{#3}}}
\def\Mag{{M}}            % EM magnetization vector
\def\Pol{{P}}            % EM Polarization vector
\def\Efield{{E}}         % EM Electric field vector
\def\Bfield{{B}}         % EM Magnetic field vector
\def\Dfield{{D}}         % EM Electric displacement field vector
\def\Hfield{{H}}         % EM Magnetic field vector
\def\Ffield{{F}}         % EM F tensor or 2form
\def\Gfield{{G}}         % EM G tensor or 2form
\def\Jvec{{J}}           % EM J current density vector
\def\FTepsilon{\epsilon}   % was %\def\FTepsilon{\widetilde{\epsilon}}
\def\Hdual{{\star}}      % Hodge dual
\def\Star{{{\Hdual}\,}}
\def\StarX{{{\Star}_{{}_X}}}  % spacing fix
\def\Starhat{{\indevice{\Hdual}\,}}
\def\StarXhat{{\indevice{\Hdual}_{{}_X}}}
\def\pstar{{\ast}}  % the pushforward or pullback star symbol
\def\mstar{{-\ast}}
\newcommand{\vectorPart}[1]{\underline{#1}}
\def\Vx{{\vectorPart{x}}}
\def\Vy{{\vectorPart{y}}}
\def\PiX{{\pi_\textup{X}}}
\def\PiY{\pi_\textup{Y}}
\def\morphX{\morph_\textup{X}}
\def\morphY{\morph_\textup{Y}}
\def\morphXY{\morph_\textup{XY}}
\def\PsiX{\Psi_{\textup{X}}}
\def\PhiX{\Phi_{\textup{X}}}
\newcommand{\indevice}[1]{\tilde{#1}}
\newcommand{\indesign}[1]{{#1}}
\newcommand{\manifold}[1]{\mathcal{#1}}
\def\man{{\manifold{M}}}
\def\MX{{\indesign{\man}_\textup{X}}}
\def\MY{{\indesign{\man}_\textup{Y}}}
\def\Mhat{{\indevice{\man}}}
\def\MXhat{{\indevice{\man}_\textup{X}}}
\def\MYhat{{\indevice{\man}_\textup{Y}}}
\def\Pihat{{\indevice{\Pi}}} % should define as \devicePi or \PiDevice etc
\def\Phihat{{\indevice{\Phi}}}
\def\Psihat{{\indevice{\Psi}}}
\def\Jhat{{\indevice{\Jvec}}}
\def\Zhat{{\indevice{Z}}}
\def\chihat{{\indevice{\chi}}}
\def\omegahat{{\indevice{\omega}}}
\def\khat{{\indevice{k}}}
\def\tHat{{\indevice{t}}}
\def\xhat{{\indevice{x}}}
\def\PiXhat{{\indevice{\pi}_\textup{X}}}
\def\PiYhat{{\indevice{\pi}_\textup{Y}}}
\def\kappahat{{\indevice{\kappa}}}
\def\ZhatX{\indevice{Z}_\textup{X}}
\def\Deltahat{\indevice{\Delta}}
\def\yhat{\indevice{y}}
\def\zhat{\indevice{z}}
\def\FX{\Ffield_{\textup{X}}}
\def\FY{\Ffield_{\textup{Y}}}
\def\Fhat{{\indevice{\Ffield}}}
\def\Ghat{{\indevice{\Gfield}}}
\def\FhatX{\Fhat_{\textup{X}}}
\def\FhatY{\Fhat_{\textup{Y}}}
\def\kappatime{\kappa_{\textup{temp}}}
\def\Plasf{\omega_{\textup P}}  % EM Lorentz model plasma frequency (coupling)
\def\Damp{\gamma}               % EM Lorentz model damping
\def\Res{\omega_{\textup R}}    % EM Lorentz model resonant frequency
\def\morph{\varphi}   % symbol for morphism, transformation
\def\speedofL{c}      % speed of light (vacuum)
\def\manU{{\indesign{\man}}}
\def\manV{{\indevice{\man}}}
\def\Point{x}%\mathcal{P}}
\def\Morph{\morph}
\newcommand{\quotes}[1]{{\lq}{#1}{\rq}}
\def\Volts{{\textup{V}}}
\def\meters{{\textup{m}}}
\def\seconds{{\textup{s}}}
\def\Coulombs{{\textup{C}}}
\def\vect{{\textup{vec}}}
\begin{document}

\title{On Spacetime Transformation Optics:  Temporal and Spatial Dispersion}

\author{Jonathan Gratus$^{1,2}$,
Paul Kinsler$^{1,2,3}$,
Martin W. McCall$^3$,
and
Robert T. Thompson$^4$}

%\author{Jonathan Gratus$^{1,2}$}
%\author{Paul Kinsler$^{1,2,3}$}
%\author{Martin W. McCall$^3$}
%\author{Robert T. Thompson$^4$}

%File:\jobname.tex

\address%[cci]{
{$^1$
  Cockcroft Institute, Keckwick Lane,
  Daresbury,
  WA4 4AD,
  United Kingdom.}

\address%[lan]{
{$^2$
  Physics Department,
  Lancaster University,
  Lancaster LA1 4YB,
  United Kingdom.}

\address%[icl]{
{$^3$
  Blackett Laboratory, Imperial College London,
  Prince Consort Road,
  London SW7 2AZ,
  United Kingdom.}

\address%[uo]{
{$^4$
  Department of Mathematics and Statistics,
  University of Otago,
  P.O. Box 56,
  Dunedin 9054,
  New Zealand.}

\begin{abstract}
The electromagnetic implementation of cloaking,
 the hiding of \emph{objects} from sight by diverting and reassembling
 illuminating electromagnetic fields has now been with us
 ten years,
while the notion of hiding \emph{events} is now
 five.
Both schemes as initially presented neglected the inevitable dispersion
 that arises when a designed medium replaces vacuum under transformation.
Here we
 define a transformation design protocol
 that incorporates both spacetime transformations
 and dispersive material responses in a natural and rigorous way.
We show how this methodology is applied to
 an event cloak designed to appear as a
 homogeneous and isotropic but \emph{dispersive} medium.
The consequences for spacetime transformation design
 in dispersive materials are discussed,
 and some parameter and bandwidth constraints identified.

\end{abstract}

\maketitle
%\tableofcontents

\section{Introduction}
\label{ch_intro}

Transformation Design --
 the use of the mathematical transformation of reference materials
 into those interesting \quotes{device} properties
 is an area of active research interest --
 all the way from the most abstract theory and conceptualising
 \cite{McCall2015,McCall2013a,Smolyaninov-2013jo,Thompson-F-2015pra,Liu-J-2015prl},
 through to concrete theoretical proposals
 \cite{McCall2014a,MitchellThomas-MQHH-2013prl,Guenneau-AV-2012oe,Zentgraf-VTLZ-2010am}
 and technological implementations \cite{Schurig2006,Ergin-SBPW-2010s,Kadic-BSW-2015ptsa,Kinsler-TTTK-2012ejp}.

In order to achieve the graduated and controllable modulation
 of material properties that are a necessary part
 of any transformation device,
 we need to understand the underlying behaviour which generates them.
From a fundamental (microscopic) perspective,
 all non-vacuum material properties are dynamic in nature,
 resulting from the reaction of atoms, molecules, or more complex structures
 (metamaterials) to the impinging electromagnetic field,
 and thus changing how that field propagates.
It is then an effective -
 and most likely homogenised \cite{HomogMetaMat-2013pnfa,Smith-VKS-2005pre} --
 version of this dynamic process
 which we can often simplify into
 macroscopic permittivity and permeability functions,
 or perhaps even just a refractive index \cite{McCall2016}.
The sole remaining symptom of the original dynamics
 is then the frequency dependence of these constitutive quantities.

Ordinary spatial-only cloaking relies on a material response
 to achieve the device properties necessary for their operation,
 most notably one can consider the split-ring resonators
 used in the original proposal \cite{Pendry2006}.
Here,
 the intrinsically dynamic nature of material responses
 are typically not too much of a problem --
 we can specify an operating frequency and bandwidth,
 and hope that our expertise at metamaterial construction
 allows us to achieve the necessary material properties
 \cite{Milton-2010njp,Boardman-GKKLLMNRS-2011lpr,Simovski-BAK-2012am}.

Spacetime or \quotes{event} cloaks \cite{McCall2011a},
 or any other spacetime transformation devices
 \cite{McCall2015,McCall2013a,Cummer2011jo},
 are more subtle constructions,
 with an intrinsic and carefully calibrated space and time dependence.
Event cloaks have been implemented in nonlinear optics
 using dispersion as an intrinsic feature of their operation
 \cite{Fridman2012},
 in order to engineer an effective controllable speed profile.
However,
 the full details of spacetime cloaking were not addressed in
 either the original paper,
 or in the recent experiments.
A spacetime transformation not only affects
 the required material parameters,
 but also changes the underlying dynamics of the material response.
As we show in this paper,
  even the introduction of a simple one-pole
  resonance dispersion model in the design medium
  results in a number of unexpected features,
  such as the 
  resultant device's
  medium characteristics being determined
 by third derivatives of the spacetime coordinates,
 the generation of spatial dispersion from purely temporal dispersion,
 and the induction of magneto-electricity,
 beyond what is known to occur in \quotes{dispersion-free} spacetime cloaking.

This means that
 the transformation design process requires us to either
 adjust our material design to compensate
 for these extra complications,
 or engineer that extra complication so as to match our design specification.
In practise this will probably reduce to an additional
 and rather pragmatic trade-off
 of the sort we already make when attempting to build an ordinary spatial-only
 transformation device --
 what degree of approximation can we tolerate
 when attempting to match our desired performance range?
The results in this paper aim to show not only the true transformation rules
 needed for dispersive media,
 but also to inform us of how those affect the implementation trade-offs
 that will be needed.

Since electrodynamics is fundamentally a four-dimensional theory,
 all transformation optics devices
 should be seen as spacetime ones \cite{Thompson2015meta1}, 
 where
 purely spatial phase-preserving transformation devices 
 are obtained by restricting
 the time transformation to the identity.
This identity transformation, 
 when used for a cloak designed to look like vacuum,
 has the effect of forcing the ray trajectories
 to become spacelike through some regions. % of a cloak.
The common method for avoiding this, 
 which is to assume the cloak is constructed inside
 a background medium with refractive index sufficiently greater than one,
 is the same as assuming a non-identity time transformation.
In any case,
 for any \emph{single} frequency of operation, 
 such faster-than-light propagation
 is not strictly prohibited by relativity, 
 since the indistinguishability of phase fronts
 implies that no signal can be transmitted.
Nevertheless,
 the Kramers-Kronig relations concomitantly ensure
 that any phase-preserving cloak must be
 inherently and unavoidably dispersive.  
This unavoidable dispersion has been shown
 to betray the presence of moving cloaks \cite{Halimeh2016pra1}
 and could have other consequences.  
A fuller understanding of dispersion in transformation optics
 is therefore desirable even from the perspective
 of purely spatial applications, 
 as well as from the potential for dispersion engineering
 \cite{Moccia2016opt}.

This paper is organised as follows.
First,
 in section \ref{S-background} we describe a
 dispersive spacetime cloak
 in general terms and introduce
 our mathematical \quotes{morphism} terminology.
Section \ref{S-onepole} then describes the handling of the one-pole
 or Lorentzian resonance which will make up our design medium --
 that which we want observers to infer exists.
We show how this oscillator,
 whose most direct description is its temporal differential equation,
 can be represented in either the frequency domain
 or in terms of an integral kernel.
Next,
 section \ref{S-tdesign} shows how the transformation design process
 is implemented in this most general dispersive and spacetime case,
 either as operators,
 or again as an integral kernel.
Section \ref{S-transformedpole} then applies the general machinery
 to our chosen case,
 and calculates what kind of temporally and spatially varying oscillator
 is needed for a spacetime cloak to perfectly mimic
 our desired \quotes{one-pole} appearance.
We then use these results to estimate some bandwidth/parameter constraints
 that could be applied to our cloak design if we only had
 temporal oscillators to build with.
Finally,
 in section \ref{S-conclusion} we present our conclusions.

%
%=======================================================================
\section{Background}\label{S-background}

\begin{figure}[ht]
\centering{
\includegraphics[width=0.79\columnwidth]{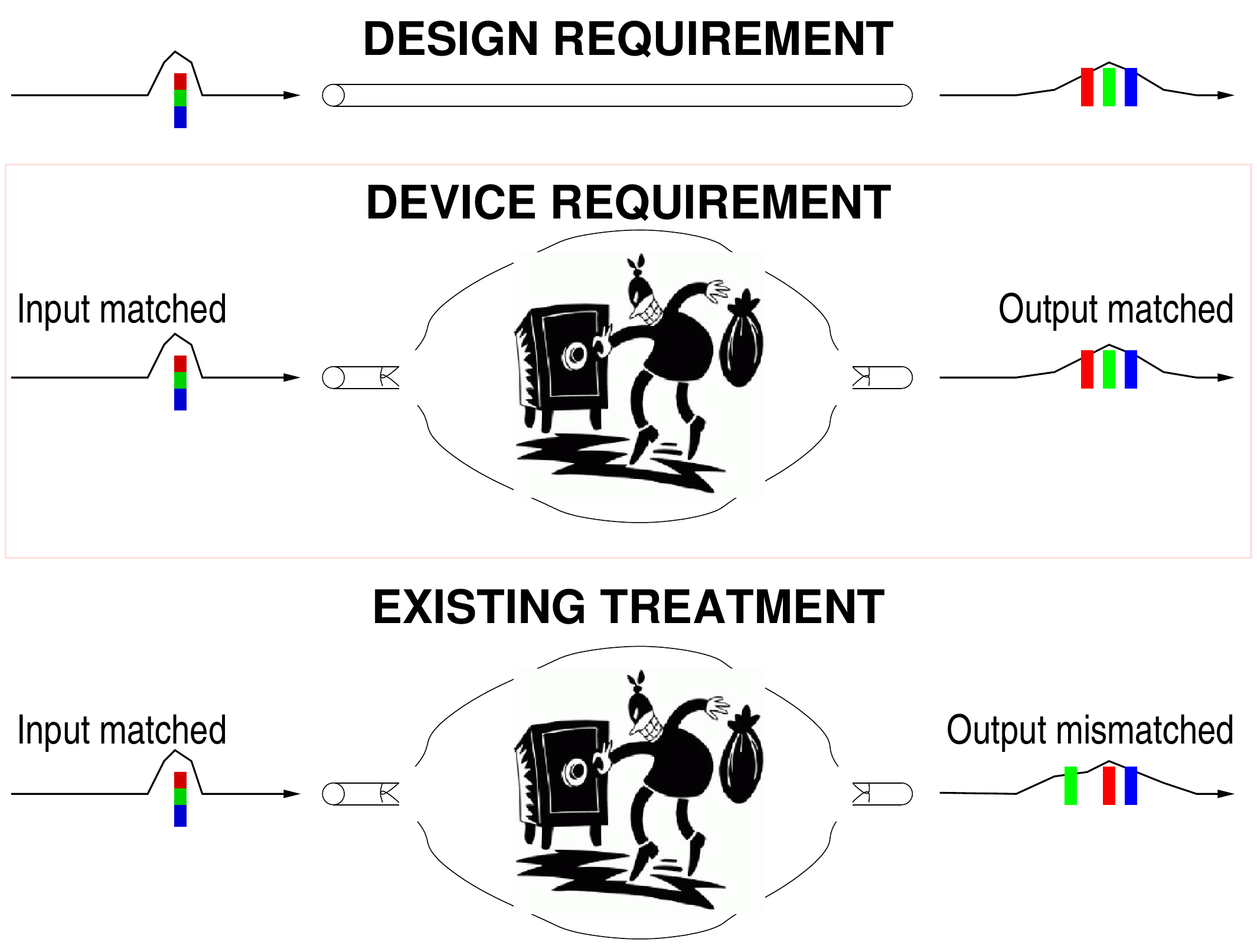}
}
\caption{
The requirements of spacetime transformation schemes:
 whilst the visible behaviour of the device
 should only be to (top) alter any incident illumination by
 the expected dispersion properties,
 the actual transformation device (middle)
 also must hide a chosen event from any observer.
Existing treatments,
 which ignore the effects of spacetime transformations
 on the dispersion properties,
 will not perfectly match the design requirement --
 even if all the beam steering and scattering suppression
 is still implemented correctly. 
The implication is that the observer will suspect
 that tampering has occurred --
 depicted here by the output illumination pulse 
 having the wrong frequency chirp -- 
 even though the event itself is still perfectly hidden.
}
\label{fig-diagram}
\end{figure}

In this work we envisage a simple event cloak scheme,
 but in contrast to the original formulation based on
 a homogeneous, isotropic, and dispersion-free background,
 we want our \emph{device} to hide an event inside
 a dispersive medium.
That is,
 we are going to \emph{design} an event cloak so that
 the device itself,
 despite its many complications,
 seems to an observer to be acting like
 a simple homogeneous and isotropic material
 that follows the standard Lorentz model.
Naturally,
 since this is a linear system,
 the method could be straightforwardly generalised
 to encompass a sum of Lorentz oscillators as well
 \cite{Dirdal-S-2013pra};
 and it is worth noting that with careful parameter choice,
 the Drude model for material response can be encoded
 within the Lorentz model.
We also show how to define material responses
 as differential equations for polarization 2-forms,
 and this methodology is general enough to also handle
 many other (i.e. non-Lorentzian) response models.
We also provide an integral kernel approach
 that is even more general.

Now consider how our cloak needs to work in practise.  An optical
pulse which started with a fixed phase, but then travelled through an
ordinary dispersive medium, will typically emerge with some chirp,
simply because its different frequency components experience different
phase velocities, as well as generating a group velocity for the pulse
as a whole.  This situation is depicted at the top of
fig. \ref{fig-diagram}, and is how our spacetime cloaking device is
designed to appear to an observer.  We however, want to hide an event
inside a different \quotes{device} medium, whose spatial and temporal
properties not only hide our chosen event, but \emph{also} mimic an
ordinary dispersive medium of our choice, as seen at the middle of
fig. \ref{fig-diagram}.  If we do not properly consider \emph{all}
aspects of how the medium might need to be adapted to the true
spacetime nature of our cloaking transformation, an observer may see
evidence of tampering despite the cloaked event remaining hidden,
as shown at the bottom of fig. \ref{fig-diagram}.

To be clear,
 this chirp-induced betrayal of the cloak's presence is an entirely different effect
 than that of wave scattering by a reduced parameter cloak \cite{Colquitt-JMMBM-2013prsa},
 and can be present even when ray trajectories are perfectly preserved by the cloak. 
Our interest in this work is not on the effectiveness or scattering reduction
 achieved by some implementation of a cloaking device. 
Although an important point,
 and certainly so in the more mature area of spatial cloaking,
 its applicability is more relevant to specific implementations
 than to the fundamentals we address here.

The goal is therefore to construct constitutive relations
 of a medium such that
 (a) there is a space time cloak,
 and
 (b) that the observer sees frequency dispersion corresponding
 to a one pole Lorentz resonance.
The challenge with incorporating dispersion
 is that the new constitutive relations are now functions
 of both frequency $\omega$ and time $t$.
However,
 since $\omega$ and $t$ are conjugate Fourier variables,
 we need somehow to give meaning to constitutive media properties --
 the permittivity and permeability --
 which will depend on both.
Further,
 since we perform transformations in space and time,
 the new constitutive relations will mix
 both time $t$ and space $x$
 with  frequency $\omega$ and wavevector $k$.
Thus the new constitutive relations are both inhomogeneous in space
 \emph{and} spatially dispersive \footnote{As
  an interesting aside,
  in this there is an analogy to quantum mechanics,
  in that when writing down expressions in $t$ and $\omega$
  they do not commute and we find that the commutator $[t,\omega]=-i$.
 This implies that a solution to Maxwell's equations
  cannot be single mode $e^{i\omega t}$.
 This may give some insight into the nature of quantum mechanics
 by analogy with electromagnetic modes in dispersive inhomogeneous media.}
 \cite{Gratus-M-2015jo,Gratus-T-2011jmp,agranovich1985crystal}.

In this work we will consider the case of general spacetime
 transformation design incorporating dispersive effects,
 for which we will typically use
 the idea of an event cloak as a proxy.
We will use two approaches
 to describing the constitutive properties:
 a differential operator approach
 which is particularly useful for the one-pole resonance
 that is the main focus of this article;
 and an integral kernel method valid for more general linear media.
Our mathematical underpinning of the physics is now as follows.

The device itself will consist of some complicated arrangement
 of material properties,
 but it is designed to appear as if it were simple.
The mathematical description of this (simple) design space
 takes place on a \quotes{design manifold} $\man$,
 and the description of the device
 takes place on a \quotes{device manifold} $\Mhat$.
As depicted in fig. \ref{fig-morphism},
 linking the two is a transformation or morphism $\morph$
 which expresses how spacetime points on $\Mhat$
 (i.e. inside the device)
 need to be located on $\man$ --
 so that fields travelling through the device emerge in time and space
 as if having travelled through our designed $\man$.
For example,
 in the well-known spatial and dispersionless cloak case,
 we have that origin-avoiding trajectories in $\Mhat$
 become straight lines in $\man$;
 for a dispersionless event cloak
 the curtain-map 
 (\cite{McCall2011a}, Fig. 6)
 %\cite[Fig. 6]{McCall2011a}
 used
 converts between a space with a diamond-shaped cloaking region
 and one with the diamond closed up.

\begin{figure}
  \begin{center}
    \resizebox{0.80\columnwidth}{!}{
    \input{fig02_morphismpktx_pdf_t}}
  \end{center}
    \caption[] { \label{fig-morphism} Diffeomorphism:  points
      $\Point$ in the device manifold $\manV$ are mapped to points
      ${\morph} (\Point)$ in the design manifold $\manU$ by the
      morphism (mapping) ${\Morph}$.  Even though throughout this work
      we use, where possible, a coordinate independent notation,
      coordinates can be constructed. For example, with
      $\indevice{\xi}$
      mapping manifold
      $\manV$ to $\Real^4$ and $\xi$ mapping $\manU$ likewise.
      Nevertheless, at the
      implementation stage, specific coordinate systems are
      invaluable, since $\morph$ can be represented by $\xi \circ
      {\morph} \circ \indevice{\xi}^{-1}$ in $\Real^4$ which then
      gives us the \quotes{blueprint} for our device.  }
\end{figure}

In this paper we will use the coordinate free notation
 of exterior differential forms\footnote{
  We have used the standard tools of coordinate free differential
  geometry, i.e. the wedge product, exterior derivative, internal
  contraction, Lie derivative, Hodge dual and pullback. 
These are defined as follows:\\
~
The exterior derivative $d$ increases the degree of a form by 1. For
0--forms also known as scalar fields, $d\phi=\pfrac{\phi}{x^a} ~ dx^a$
and on the wedge product $\alpha\wedge
  \beta$ of a $p$--form $\alpha$, and a form of arbitrary degree
  $\beta$, via $d\left(\alpha\wedge \beta\right) = d\alpha\wedge
  \beta +(-1)^p\alpha\wedge d \beta$.\\
~
The internal contraction is defined for
  the vector field $V$, $i_V$ acts on a 1--form $\alpha$ as $i_V\alpha =
  V^\alpha \alpha_\alpha$, and on the wedge product $\alpha\wedge
  \beta$ of a $p$--form $\alpha$, and a form of arbitrary degree
  $\beta$, via $i_V\left(\alpha\wedge \beta\right) = i_V\alpha\wedge
  \beta +(-1)^p\alpha\wedge i_V \beta$. The internal contraction
  operator $i_V$ therefore reduces the degree of a form by 1.\\
~
The Lie derivative maps $p$--forms to $p$--forms
 via Cartan's identity $L_V=d ~ i_V+i_V ~ d$.\\
~
The Hodge dual $\Star$ takes $p$ forms to $4-p$ forms,
 can be succinctly and uniquely defined by the requirement
 that it is tensorial,
 $\Star(\alpha\wedge dx^a)= g^{ab} i_{\partial/\partial x^a}\Star \alpha$
 and that the 4-volume form $\Star 1$
 has the correct orientation with $\Star\Star 1 = - 1$.\\
~
Given a map $\morph:\Mhat\to \man$, the pullback $\morph^\pstar$ maps
$p$--forms on $\man$ to $p$--forms on $\Mhat$ and satisfies: For 0--forms
$\morph^\pstar\phi=\phi\circ\morph$ and on arbitrary forms $\alpha$,
$\beta$,
$\morph^\pstar(\alpha\wedge\beta)
=\morph^\pstar(\alpha)\wedge\morph^\pstar(\beta)$
and $\morph^\pstar(d \alpha)=d (\morph^\pstar\alpha)$.\\
~
 All these operations distribute across addition.}
\cite{Flanders1963,HehlObukhov} --
 although we could,
 for example,
 always use an indexed notation,
 this would complicate the equations unnecessarily.  
When we map the manifolds onto charts with coordinate systems
 (see fig. \ref{fig-morphism}),
 we are using an \quotes{active transformation}, 
 not a \emph{passive} \quotes{coordinate transformation}.
It is important to appreciate that the physics of Transformation Optics
   is independent of this coordinate representation,
 a point discussed   further in \cite{Thompson-F-2015pra}.
One other crucial point is
 that on both manifolds, the underlying spacetime metric is taken to
 be Minkowski with Lorentzian signature $(-,+,+,+)$.  Much of the
 work here applies for general constitutive relations on curved
 spacetimes, however we do exploit the fact that the there is a
 timelike killing vector given by $\partial_t$.  
Further,
 although the notion of an effective \quotes{optical metric}
 \cite{leonhardt2012geometry} can indeed be useful, 
 we do not need or use it here.
However an alternative interpretation of transformation optics
 is to consider two metrics on the device manifold $\indevice{\man}$: 
 the Minkowski $\indevice{g}$ and the optical $g_{opt}$. 
Since there is no longer a single preferred metric,
 knowing which aspects of electromagnetism
 are independent of the metric is useful, 
 as in the premetric formulation \cite{HehlObukhov}.

Maxwell's equations in our chosen notation are
%[
\begin{align}
  d\Ffield = 0 
\qquadand
  d \Star \left( \epsilon_0 \Ffield + \Pi \right)
=
  \speedofL^{-2} \Star \Jvec~,
\label{intro_Max_FG}
\end{align}
%]
where $\Star$ denotes the Hodge dual,
 $\speedofL$ is the speed of light in vacuum, 
 $\Jvec$ is the current density, 
 and $\epsilon_0$ is the vacuum permittivity\footnote{Note
   that dimensionally $[dx^a]=\meters$ and $[\partial_a]=\meters^{-1}$.
  The metric, which converts vectors to 1--forms, has
   units of $\meters^2$. The Hodge dual has units which depend on the
   degree: $[\Star\alpha]=\meters^{4-2\deg(\alpha)} [\alpha]$. 
  Integration and exterior differentiation have no effect on units. 
  The components of the electric field have dimensions
   $[\Efield^a]=\Volts\meters^{-1}$. 
  Thus the vector $\Efield_\vect=\Efield^a\partial_a$
   has dimensions $[\Efield_\vect]=\Volts\meters^{-2}$, 
   and the dimensions of the 1--form $E$ are $[\Efield]=\Volts$
   (and $[\int \Efield]=\Volts$). 
 Likewise,
  since $[\Efield]=[\speedofL\Bfield]$,
  then $[\Bfield]=\Volts\seconds\meters^{-1}$. 
 Also,
  $[\Ffield]=[ dt \wedge \Efield]
    =[ \speedofL \, \Star ( dt \wedge \Bfield ) ]
    =[ \speedofL \, ( dt \wedge \Bfield ) ]
    =\seconds\Volts$. 
 Thus $[\epsilon_0 \Ffield]
    = (\Coulombs\Volts^{-1}\meters^{-1})(\seconds\Volts)
    = \Coulombs\meters^{-1}\seconds$. 
 This gives $[\Star \epsilon_0 \Ffield]
    =\Coulombs\meters^{-1}\seconds$
  and hence $[\epsilon_0 \speedofL^2 d\Star \Ffield]
    =\Coulombs\meters\seconds^{-1}$. 
 The components of the current density $\Jvec^a$
  have dimensions $[\Jvec^a]=\Coulombs\seconds^{-1}\meters^{-2}$. 
 Thus the vector current is
  $[\Jvec_\vect]=\Coulombs\seconds^{-1}\meters^{-3}$,
  then 1--form $[\Jvec]=\Coulombs\seconds^{-1}\meters^{-1}$
  and the 3--form $[\Star \Jvec]=\Coulombs\seconds^{-1}\meters$. 
 Thus \eqref{intro_Max_FG} is dimensionally correct.}.  
We can also split the electromagnetic 2--form $\Ffield$ 
 so that its electric field $\Efield$ and magnetic field $\Bfield$ 1--form 
 sub-components are visible, and show how the dielectric
 polarization $\Pol$ and the magnetisation $\Mag$
 appears inside the polarization 2--form $\Pi$, i.e.
%[
\begin{align}
  \Ffield = dt\wedge \Efield + \speedofL \Star \left( dt\wedge \Bfield \right)
\qquadand
  \Pi = dt\wedge \Pol - \Star \left( dt \wedge \Mag\right)
.
\label{intro_FG}
\end{align}
%]
where
%[
\begin{align}
\Efield = i_t \Ffield
,\quad
\Bfield = - \speedofL^{-1} \,i_t \Star \Ffield
,\quad
\Pol = i_t \Pi
\quadand
\Mag = i_t\Star\Pi~.
\label{intro_EBDH}
\end{align}
%]
Here $i_t = i_{\pfrac{}{t}}$ is the
internal contraction operator taking the 2--form $dt \wedge \Efield$ to the
1--form $\Efield$, for example. 
We use the symbol $\partial_t=\pfrac{}{t}$ etc. 
 for both the partial derivative and the corresponding vector field.

Since we will have to deal with the dynamics
 of the material response explicitly,
 we do not use a susceptibility tensor  $\chi$ as
 might normally be expected.
Instead we represent the material using
 constitutive operators $\Psi$ and $\Phi$
 which take 2--form fields to 2--form fields,
 so that
%[
\begin{align}
  \Psi
 ~
  \Pi
=
  \Phi
 ~
  \Ffield
,
\label{intro_constitutive_oper_unhat}
\end{align}
%]
or, in components
%[
\begin{align}
 \Psi^{cd}_{ab} ~ \Pi_{cd}
=
 \Phi^{cd}_{ab}  ~ \Ffield_{cd} ~,
\label{intro_constitutive_Phi_Psi_unhat}
\end{align}
%]
where $\Psi^{cd}_{ab}$ and $\Phi^{cd}_{ab}$ are antisymmetric in $cd$
and $ab$.
The reasons for using \eqref{intro_constitutive_oper_unhat}
 will become clear later,
 but for now we note that the structure of this
 matches that of the differential equation for $P$
 as given below in \eqref{intro_pde_P}.
The summation convention is used throughout,
 and spacetime indices $a,b, \ldots =0,1,2,3$,
 space indices $\mu,\nu, \ldots = 1,2,3$.
Finally,
 throughout this paper we use the non-unitary,
 angular frequency definition of Fourier transform.

%
%=======================================================================
\section{Design goal: the one-pole resonance}\label{S-onepole}

Our goal requires that the constitutive properties of the
 design medium appear to be a stationary,
 homogeneous, isotropic material
 with the behaviour of a single-pole Lorentz oscillator.
In the simple case of a global Lorentz coordinate system,
 we could simply write a temporal differential equation
 (see e.g. \cite{Kinsler-2011ejp})
 using partial derivatives acting on the relevant components;
 however our aim
 demands that we use a more general spacetime form with Lie derivatives,
 i.e.
%]
%[
\begin{align}
\left( L_t^2 + \Damp L_t + \Res^2\right) \Pol = \Plasf^2  ~ \Efield~,
\label{intro_pde_P}
\end{align}
%]
where $L_t$ denotes the Lie derivative $L_t = L_{\partial_{t}}$ and
$\Res$ is the resonance frequency, $\Damp$ is the damping, $\Plasf$ is
the coupling strength. We choose, as
a matter of model construction, Lie transport, and therefore the Lie
derivative, as opposed to parallel transport and its corresponding
covariant derivative. For static media the two models are
indistinguishable. However, the spacetime transformation optics for a covariant
derivative formulation will be distinct.

In the simple case of a global Lorentz coordinate system,
 the Lie derivatives can be replaced by partial derivatives
 acting on the relevant components.
In such a situation,
 the frequency domain behaviour of the Lorentz oscillator has the form
%[
\begin{align}
  \FTepsilon(\omega)
=
  \epsilon_0
  \left(
    1
   +
    \frac{\Plasf^2}{-\omega^2+i\Damp\omega+\Res^2}
  \right)
\qquadand
\mu=\mu_0~,
\label{intro_one_pole_resonance}
\end{align}
%]
where $\mu_0$ is the vacuum permeability.
The electric and displacement fields $\Efield, \Dfield$ are linked
 by constitutive relations depending on the material polarization $\Pol$
 which can be extracted from either \eqref{intro_pde_P} or
 \eqref{intro_one_pole_resonance} above.
Assuming that the magnetic response is that of the vacuum, i.e. $\Mag=0$,
 we have
%[
\begin{align}
\Hfield = \mu_0^{-1} \Bfield
\qquadand
\Dfield = \epsilon_0 \Efield + \Pol~.
\label{intro_const_rel}
\end{align}

One point of note is that
 the operator $\left( L_t^2 + \Damp L_t + \Res^2 \right)$
is not one-to-one.
This is because
 $\left( L_t^2 + \Damp L_t + \Res^2 \right) e^{i\sigma t} =0$
 where $\sigma$ is a root of ${-\omega^2+i\Damp\omega+\Res^2}=0$.
In this case the 1--form $C$ such that $\Pol= C e^{i\sigma t}$
 satisfies $L_t C=0$.
As a result,
 we find that ${\epsilon}$ is infinite,
 so we denote these `plasma resonance modes',
 and exclude them from our analysis.
However,
 if the damping $\Damp>0$
 and we deal only with real frequencies,
 then the plasma resonance modes are automatically excluded.

We now need to represent this dynamic material response
 as a constitutive property of Maxwell's equations.
We will do this first in a frequency domain picture,
 then as an integral kernel.

%
%-----------------------------------------------------------------------
\subsection{Operator representation}\label{S-frequencydomain}
%: time and frequency domains

When relating the time and frequency versions
 of the dynamic material response,
 it is useful to first
 show explicitly how the constitutive relations
 may be written as in
  \eqref{intro_constitutive_oper_unhat} %and
 %\eqref{intro_constitutive_Phi_Psi_unhat}
 for the case of the one-pole resonance considered here. We do this by
 proposing (and proving) the following Lemma.

\begin{lemma}
We can write the material response from
 \eqref{intro_pde_P}, \eqref{intro_const_rel} as the
 constitutive relations in \eqref{intro_constitutive_oper_unhat}
 by setting
%[
\begin{align}
\Psi=L_t^2 + \Damp L_t + \Res^2
\qquadand
\Phi=\Plasf^2 ~ dt\wedge i_t~,
\label{intro_constitutive_Psi_Phi_uhat}
\end{align}
%]
i.e.
%[
\begin{align}
  \left( L_t^2 + \Damp L_t + \Res^2 \right) 
  \Pi
=
  \Plasf^2 ~ dt\wedge i_t \Ffield~,
\label{intro_constitutive_uhat}
\end{align}
%]
as long as we avoid plasma resonance modes.
\end{lemma}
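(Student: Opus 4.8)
The plan is to verify that the proposed operators $\Psi$ and $\Phi$ reproduce, via \eqref{intro_constitutive_oper_unhat}, exactly the content of the polarization differential equation \eqref{intro_pde_P} together with the magnetic constitutive relation $\Mag=0$ from \eqref{intro_const_rel}. The key observation is that \eqref{intro_constitutive_oper_unhat} is an equation between 2--forms, whereas \eqref{intro_pde_P} is an equation between 1--forms; so the first step is to decompose the 2--form equation $\left(L_t^2+\Damp L_t+\Res^2\right)\Pi = \Plasf^2\, dt\wedge i_t\Ffield$ into its \quotes{electric} and \quotes{magnetic} parts, using the splitting \eqref{intro_FG}--\eqref{intro_EBDH}. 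Contracting with $i_t$ and using $\Pol=i_t\Pi$, $\Efield=i_t\Ffield$, together with the fact that $i_t$ commutes with $L_t$ (since $[L_{\partial_t},i_{\partial_t}]=i_{[\partial_t,\partial_t]}=0$), recovers \eqref{intro_pde_P} precisely; applying instead $i_t\Star$ and using $\Mag=i_t\Star\Pi$ should yield $\left(L_t^2+\Damp L_t+\Res^2\right)\Mag = 0$, which forces $\Mag=0$ up to plasma resonance modes. I would spell out that $i_t\Star\!\left(dt\wedge i_t\Ffield\right)=0$, since $dt\wedge i_t\Ffield$ has no $dt$-free part and the Hodge dual followed by $i_t$ annihilates it — this is the computation that makes the right-hand side purely \quotes{electric}.

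Next I would argue the converse direction: that \eqref{intro_pde_P} plus $\Mag=0$ imply the 2--form identity \eqref{intro_constitutive_uhat}. Here one reconstructs $\Pi = dt\wedge\Pol - \Star(dt\wedge\Mag) = dt\wedge\Pol$ and checks that $\left(L_t^2+\Damp L_t+\Res^2\right)(dt\wedge\Pol) = dt\wedge\left(L_t^2+\Damp L_t+\Res^2\right)\Pol$, which requires $L_t(dt\wedge\alpha)=dt\wedge L_t\alpha$, i.e. $L_{\partial_t}dt = 0$ — true because $L_{\partial_t}dt = d(i_{\partial_t}dt)+i_{\partial_t}ddt = d(1) = 0$. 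Then substituting \eqref{intro_pde_P} gives $dt\wedge\Plasf^2\Efield = \Plasf^2\, dt\wedge i_t\Ffield$, which is exactly the right-hand side of \eqref{intro_constitutive_uhat}. Finally, reading off the definitions \eqref{intro_constitutive_Psi_Phi_uhat}, the identification $\Psi=L_t^2+\Damp L_t+\Res^2$ acting diagonally and $\Phi=\Plasf^2\, dt\wedge i_t$ is immediate, and one notes the operators have the antisymmetry in $cd$ and $ab$ required in \eqref{intro_constitutive_Phi_Psi_unhat}, since wedging with $dt$ after contracting with $i_t$ is manifestly antisymmetric on 2--forms.

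The one genuinely delicate point — and the place where the hypothesis \quotes{avoid plasma resonance modes} is essential — is the passage from $\left(L_t^2+\Damp L_t+\Res^2\right)\Mag=0$ to $\Mag=0$, and more generally the invertibility of $\Psi$ on the relevant space. As already noted in the text preceding the Lemma, the operator $L_t^2+\Damp L_t+\Res^2$ has a nontrivial kernel spanned by modes $C e^{i\sigma t}$ with $L_t C=0$ and $\sigma$ a root of $-\omega^2+i\Damp\omega+\Res^2=0$; on such modes $\epsilon$ would be infinite. So the equivalence of \eqref{intro_constitutive_uhat} with \eqref{intro_pde_P} and $\Mag = 0$ is only an equivalence modulo these modes, and the clean statement requires excluding them — equivalently, working with $\Damp>0$ and real frequencies as remarked. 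I expect this to be the main obstacle to a fully rigorous statement; the rest is bookkeeping with the identities $L_{\partial_t}dt=0$, $[L_t,i_t]=0$, and $i_t\Star(dt\wedge i_t\Ffield)=0$.
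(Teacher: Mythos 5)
Your proof is correct and follows essentially the same route as the paper: decompose the 2--form equation into its electric part by contracting with $i_t$ and its magnetic part by applying $i_t\Star$, reconstruct the forward direction from $\Pi=dt\wedge\Pol$ using $L_t\,dt=0$, and invoke the exclusion of plasma resonance modes to pass from $\left(L_t^2+\Damp L_t+\Res^2\right)\Mag=0$ to $\Mag=0$. The only ingredient you use but do not list among your identities is the commutation $L_t\Star=\Star L_t$ (valid because $\partial_t$ is Killing), which is needed to push $i_t\Star$ through the operator; the paper cites it explicitly, and with it your argument is complete.
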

\begin{proof}
We can write \eqref{intro_const_rel} and \eqref{intro_pde_P} as
%[
\begin{align}
\left( L_t^2 + \Damp L_t + \Res^2 \right) i_t \Pi = \Plasf^2 ~ i_t \Ffield
\qquadand
i_t \Star \Pi = 0~,
\label{intro_constitutive_PDE_uhat}
\end{align}
%]
However, from \eqref{intro_FG} we have
$\Pi=dt\wedge i_t \Pi$.
Then from \eqref{intro_pde_P} we have
%[
\begin{align*}
\left( L_t^2 + \Damp L_t + \Res^2 \right) \Pi
=
\left( L_t^2 + \Damp L_t + \Res^2 \right)
\left( dt \wedge i_t \Pi \right)
=
dt \wedge \left( L_t^2 + \Damp L_t + \Res^2 \right) \left( i_t \Pi \right)
=
  \Plasf^2 
  dt \wedge i_t 
  \Ffield
.
\end{align*}
%]
Thus \eqref{intro_constitutive_PDE_uhat} implies
\eqref{intro_constitutive_uhat}. Clearly
\eqref{intro_constitutive_uhat} implies the first equation in
\eqref{intro_constitutive_PDE_uhat}. In addition since $\partial_t$ is
Killing so that $L_t \Star =\Star L_t$ we have
%[
\begin{align*}
\left( L_t^2 + \Damp L_t + \Res^2 \right) i_t \Star  \Pi
=
  i_t \Star  \left( L_t^2 + \Damp L_t + \Res^2 \right) 
  %RT% i_t \Star  
  ~
  \Pi
=
  i_t \Star  
  \left(
    \Plasf^2
    dt \wedge i_t
    \Ffield 
  \right)
=0~.
\end{align*}
%]
As discussed above,
 we have excluded any consideration of  plasma resonance modes.
As a result,
 the proof is completed since the last equation above
 implies the second equation in \eqref{intro_constitutive_PDE_uhat}.
\end{proof}

Following this Lemma,
 we can create frequency dependent constitutive relations
 which can be specified by the components of $\Psi$ and $\Phi$.
By replacing $i\omega\leftrightarrow L_t$
 we get
%[
\begin{align}
\Psi^{cd}_{ab}(\omega)
=
({-\omega^2+i\Damp\omega+\Res^2})
(\delta^c_a\delta^d_b-\delta^d_a\delta^c_b)
,
\quad
%\\
%\textrm{with}\qquad
\Phi^{0\mu}_{0\nu}={\Plasf}^2 ~ \delta^\mu_\nu
,
\quad
%\qquad\textrm{and}\qquad
\Phi^{\mu\nu}_{\sigma\rho}(\omega,k,t,x)=0~.
\label{intro_constitutive_Phi_uhat_alt}
\end{align}
%]
This means that the usual susceptibility matrix,
 in Fourier transform space,
 is defined simply by
%[
\begin{align}
\chi^{cd}_{ab}(\omega)=(\Psi^{-1})^{cd}_{ef}(\omega)\Phi^{ef}_{ab}(\omega)~.
\label{intro_chi}
\end{align}
%]
We prefer to work with $\Psi$ and $\Phi$ instead
 of the more usual $\chi$,
 as we can then avoid the difficulties in forming the operator inverse
 of $\Psi$ (cf.  \eqref{intro_constitutive_Phi_uhat_alt}), 
 i.e. where we need that
\begin{align}
  \tfrac{1}{2}
  \left(\Psi^{-1}\right)^{cd}_{ef} 
  \Psi^{ab}_{cd}
&=
  \left( 
    \delta^a_e \delta^b_f - \delta^b_e \delta^a_f 
  \right)
.
\end{align}

%
%-----------------------------------------------------------------------
\subsection{Integral kernel representation}\label{S-integralkernel}

\begin{figure}
\centering{
\setlength{\unitlength}{0.03\textwidth}
\begin{picture}(12,10)
\put(0,1){\vector(0,1){4}}
\put(0,1){\vector(4,1){4}}
\put(0,1){\vector(4,-1){4}}
\put(-0.9,3){$y^0$}
\put(4,1.5){$y^1$}
\put(4,0){$y^2,y^3$}
\put(3,0){
   \includegraphics[width=0.3\textwidth]{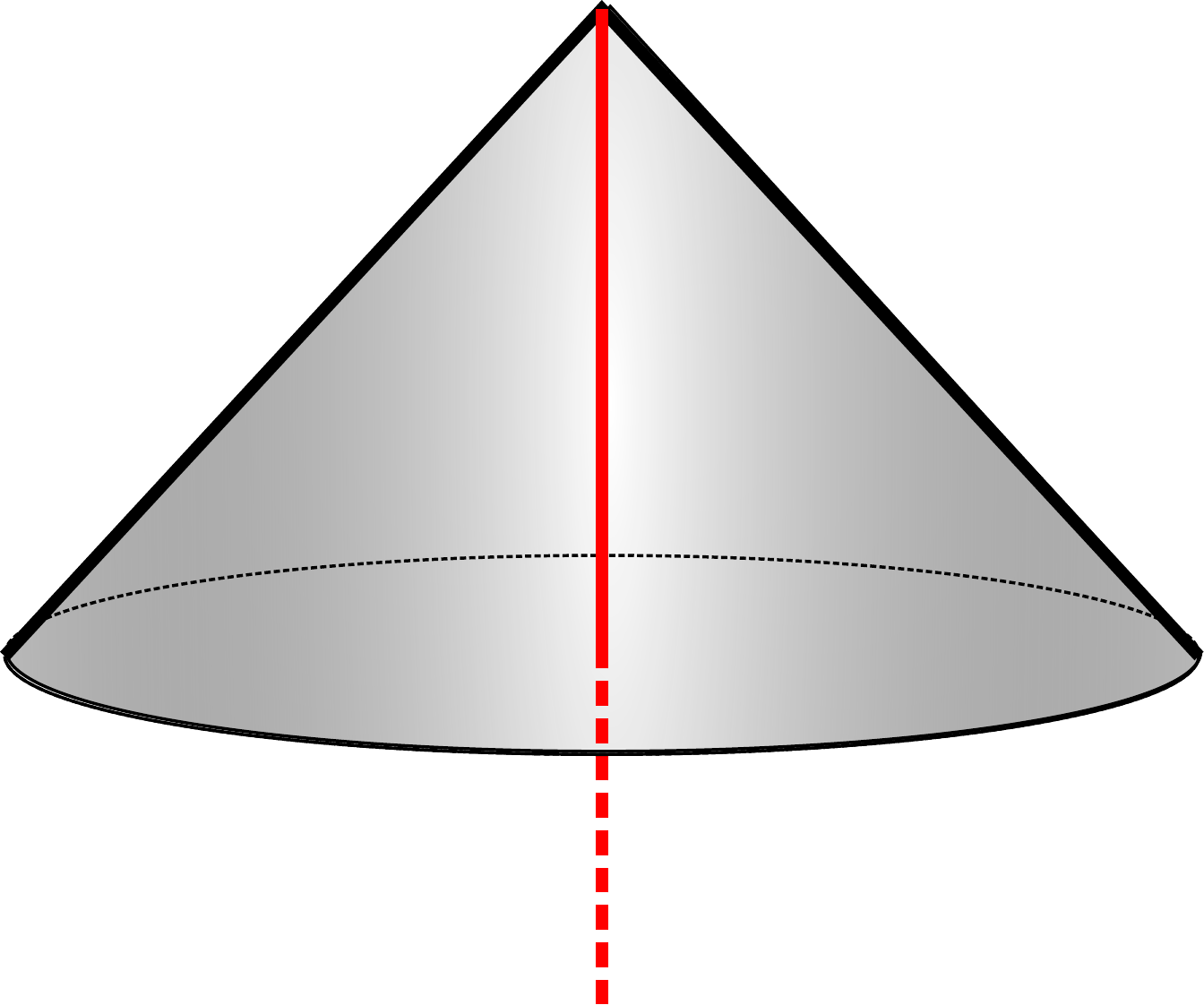}}
\put(7,8.6){$(x^0,\Vx)$}
\end{picture}
} \caption{The support (red) in $\MY$ of the kernel $\kappa$ given in
  \eqref{intro_ker_kappa}, for the point $(x^0,\Vx)\in\MX$ lying in
  its backward lightcone. This may be contrasted with the support of
  the transformed kernel, given below in fig. \ref{fig_kernel_support_real}}
\label{fig_kernel_support_virtual}
\end{figure}

It is well-known that the polarization properties of a linear, isotropic, non-magnetic medium with a local, but non-instantaneous response may be written in terms of an integral kernel as \cite{Post1997}
\begin{equation}\label{epstminustau}
P(t)=\int_{-\infty}^{\infty}\kappatime(t,\tau)E(\tau)d\tau~,
\end{equation}
where causality requires that $\kappatime(t,\tau)=0$ for $t>\tau$. In
a usual time translation invariant kernel
$\kappatime(t,\tau)=\kappatime(t-\tau)$ and Fourier techniques are
convenient. However, a necessary feature of a spacetime
transformation is that the resulting kernel will cease to be time
translation invariant, as seen in \eqref{OnePKer_phistar_kappa} and
\eqref{OnePKer_phistar_kappa_x_dx} below.

We seek to generalise the integral kernel approach of
\eqref{epstminustau} to four dimensions so that both temporal and
spatial effects may be accounted for simultaneously. This will of
course be necessary as a precursor to understanding how the
constitutive relations behave under the action of a spacetime cloak.

Both the electromagnetic field $\Ffield$ and the polarization field
$\Pi$ are 2--forms. We write
%[
\begin{align}
  \Pi_{ab}(x) 
=
  \tfrac{1}{4}
  \epsilon^{cdef}
  \int_{\man} 
    \kappa_{abcd}(x,y)
    \Ffield_{ef}(y)
    dy^0\wedge dy^1\wedge dy^2\wedge dy^3
.
\label{intro_ker_Pi}
\end{align}
where $\kappa_{abcd}(x,y)$ is a generalisation of
$\kappatime(t,\tau)$. The two parameters $t$ and $\tau$ are
generalised to two spacetime events $x$ and $y$. For convenience let
$x\in\MX$ and $y\in\MY$ where $\MX$, $\MY$ are copies of the
spacetime $\man$. We use coordinates $x=(x^0,x^1,x^2,x^3)$ with $x^0=t$
and $\Vx=(x^1,x^2,x^3)$.

As in standard Green's function theory we lift the linear operators
on $\man$, $\Psi$ and $\Phi$ given by \eqref{intro_constitutive_Psi_Phi_uhat}
into linear operators $\PsiX$ and $\PhiX$
on the product $\MX\times\MY$ by requiring they
act only on the $x$ coordinates. That is we set
%[
\begin{align}
\PsiX = L_{x^0}^2 + \Damp L_{x^0} + \Res^2
\qquadand
\PhiX = \Plasf^2 dx^0 \wedge i_{x^0}~.
\label{Kernel_PsiX_PhiX}
\end{align}
%]
Thus $\Pi$ given by \eqref{intro_ker_Pi} will satisfy the differential
equation \eqref{intro_constitutive_uhat} if the 4--form $\kappa$ satisfies
%[
\begin{align}
  \PsiX 
  \kappa 
=
  \PhiX
  \Delta
,
\label{intro_ker_green}
\end{align}
%]
where
%[
\begin{align}
  \Delta
=
  \tfrac{1}{4}
  \epsilon_{abcd} 
  \delta^{(4)}(x-y) 
  ~
  dx^{a}
  \wedge dx^{b}
  \wedge dy^{c}
  \wedge dy^{d}
,
\label{intro_ker_Delta}
\end{align}
%]
and
\begin{equation}
  \kappa 
=
  \tfrac{1}{4}
  \kappa_{abcd}(x,y)
  ~
  dx^a
  \wedge dx^b
  \wedge dy^c
  \wedge dy^d
,
\label{intro_kernal}
\end{equation}
are 4--forms on the product of two copies of
spacetime, $\MX\times\MY$.

We may write a solution of \eqref{intro_ker_green} as
%[
\begin{align}
  \kappa
=
  \Plasf^2
  \frac{\theta(x^0-y^0)\delta^{(3)}(\Vx-\Vy)}
       {2(\sigma_+-\sigma_-)}
  \left[ 
    e^{\sigma_+(x^0-y^0)} - e^{\sigma_-(x^0-y^0)} 
  \right]
  ~
  \epsilon_{\lambda\nu\rho}\
  ~
  dx^0 \wedge dx^\lambda \wedge dy^\nu \wedge dy^\rho
,
\label{intro_ker_kappa}
\end{align}
%]
where $\theta(x^0-y^0)$ is the Heaviside function, 
 $\Vx=(x^1,x^2,x^3)$
 and $\sigma_+$, $\sigma_-$
 are the two roots
%[
\begin{align}
\sigma_{\pm}=-\tfrac12\Damp\pm\sqrt{\tfrac14\Damp^2-\Res^2}~.
\label{intro_ker_mu_roots}
\end{align}
The proof that \eqref{intro_ker_kappa} is a solution to
\eqref{intro_ker_green} is given in lemma \ref{lm_ker_soln} in the
Appendix. The support of the kernel $\kappa$, that is the set of
points $(x,y)$ such that $\kappa(x,y)\ne0$,
 is given in fig. \ref{fig_kernel_support_virtual}. 
Indeed the support of any causal
 temporally dispersive kernel is given by (a subset of)
 fig. \ref{fig_kernel_support_virtual}.

We see that the coefficients in $\kappa$ in \eqref{intro_ker_kappa}
 are actually functions of the difference $x-y$, 
 using the affine structure of Minkowski spacetime. 
This is because it is the Green's function for a
 spacetime-homogeneous differential equation,
 i.e. the differential operator is independent of position.  
However in general,
 either for general relativity or for inhomogeneous media,
 $\kappa$ will \emph{not} have this structure. 
Thus we will see that $\kappahat$,
 which is the kernel to generate $\Pihat$,
 the spacetime deformation of $\Pi$,
 will not correspond to any difference.

%
%=======================================================================
\section{Transformation Design}\label{S-transformationdesign}\label{S-tdesign}

As motivated in section \ref{S-background},
 we are now going to invoke some suitable diffeomorphism $\morph$
 as a part of a transformation design process.
At first,
 the steps that need to be taken might seem relatively straightforward.
Under the diffeomorphism $\morph: {\Mhat} \rightarrow \man$
 the operators $\Psi$ and $\Phi$ become $\Psihat$ and $\Phihat$ respectively.

Although the goal for this article is to find the explicit form of the
operators $\Psihat$ and $\Phihat$  for the diffeomorphism corresponding to a spacetime cloak with a single pole
resonance, in fact all the results in this section apply to any
constitutive relations which can be represented by the differential
equation \eqref{intro_constitutive_oper_unhat}, for example a sum of
Lorentz operators. In this general case
the transformed polarisation 2--form $\Pihat$, satisfies the operator
equation
%[
\begin{align}
  \Psihat
 ~
  \Pihat
=
  \Phihat
 ~
  \Fhat
.
\label{intro_constitutive_oper_hat}
\end{align}
%]
Here $\Psihat$ and $\Phihat$ will be constructed out of the (morphed)
Lie derivatives $L_\tHat$ and $L_\xhat$ as well as internal
contractions.  Of course we will again stay away from resonances so
that $\Psihat ~ \Pihat=0$ if and only if $\Pihat=0$.  It is now
tempting to immediately apply the identities $-i\omegahat\leftrightarrow
L_\tHat$ and $i\khat\leftrightarrow L_\xhat$ in order to obtain matrix
entries $\Psihat^{cd}_{ab}(\omegahat,\khat,\tHat,\xhat)$ and
$\Phihat^{cd}_{ab}(\omegahat,\khat,\tHat,\xhat)$ and then apply the inverse of $\Psihat^{cd}_{ab}$ in order to
construct a single susceptibility matrix
$\chihat^{cd}_{ab}(\omegahat,\khat,\tHat,\xhat)$. However,
we must be careful when we write down
$\Psihat(\omegahat,\khat,\tHat,\xhat)$ as neither $\omegahat$ and $\tHat$, nor $\khat$ and $\xhat$, commute
--
 just as happens in a similar manner in quantum mechanics.
Therefore, in writing down
 $\Psihat(\omegahat,\khat,\tHat,\xhat)$
 we must be sure to retain the proper ordering of these variables;
 and this consideration makes computation of the inverse
 even more problematic.
Thus although we might \emph{formally} write down $\Psihat^{-1}$,
 calculating it in practise would be very difficult.

It is helpful to identify two distinct scales 
 of spatio-temporal variation:
 the scale on which the spacetime cloak varies,
 and the scale on which the envelope of the optical field varies. 
If the spacetime cloak varies sufficiently slowly
 that we can neglect second order and higher derivatives of $\morph$,
 then we refer to this situation as the
 \emph{Gradual Transformation Approximation} (GTA). 
A similar approximation, 
 albeit more restrictive, 
 applied to the envelope of the optical field
 results in the so-called \emph{Slowly Varying Envelope Approximation} (SVEA),
 i.e. the fields are of the form
\begin{align}
  \Efield(\tHat,\xhat)
&=
  \mathcal{A}(\tHat,\xhat)\exp\left(-i\omegahat_0 \tHat+i\khat_0 \xhat\right)
,
\label{eqn-fieldenvelope}
\end{align}
 where 
 $\mathcal{A}(\tHat,\xhat)$ is a pulse envelope
 whose variation can be considered negligible \cite{Boyd-NLO}. 
Assuming the SVEA,
 and writing the Lie derivatives $L_{\tHat}$ and $L_{\xhat}$
 to the right of $\tHat$ and $\xhat$,
 then the role of $\omegahat_0$ and $\khat_0$
 can be identified with $\omegahat$ and $\khat$
 in $\chihat^{cd}_{ab}(\omegahat,\khat,\tHat,\xhat)$. 
If we assume both the GTA and the SVEA, 
 the transformed constitutive relations are much simpler,
 but still necessarily contain both spatial and temporal dispersion,
 as given in \eqref{Appr_chihat_elec} below.

Now let $\morph$ be the diffeomorphism
 between the device manifold $\Mhat$ and the design manifold $\man$,
%[
\begin{align}
\morph:\Mhat\to \man~.
\label{gen_phi}
\end{align}
See fig. \ref{fig-morphism}.  
Let $\Starhat$ be the Hodge dual
defined with respect to the Minkowski spacetime metric on $\Mhat$. 
We emphasise that $\Starhat$ is distinct
 from the induced Hodge dual with respect to
 the optical metric given by
 $\Starhat_{\textup{optical}} = \morph^\pstar \Star (\morph^{-1})^\pstar$. 
We recall that Maxwell's equations on $\man$ are given by \eqref{intro_Max_FG}.
%]
Now setting
%[
\begin{align}
\Fhat=\morph^\pstar \Ffield~.
\label{gen_Fhat}
\end{align}
%]
Maxwell's equations on $\Mhat$ read
\begin{align}
  d\Fhat=0
\qquadand
  d ~ \Starhat ~ 
 \left( \epsilon_0 \Fhat + \Pihat \right) 
= \speedofL^{-2} {\Starhat} \Jhat~,
\label{intro_max_hat_noJ}
\end{align}
where the first of these follows immediately from  \eqref{gen_Fhat}, on account of the exterior derivative $d$ commuting with the pullback $\morph^\pstar$.
Note that in the following we will define
 $\morph^\mstar=(\morph^{-1})^\pstar$.

The derivation of $\Psihat$ and $\Phihat$ is given for a generic
transformation in section \ref{sch_Trans_Opp} below and the specific
example of the one pole Lorentz oscillator in section
\ref{S-transformedpole}.

The alternative method of representing the
constitutive relations is in terms of the integral kernel as described
in section \ref{S-integralkernel}.
The goal here is two calculate the 4--form $\kappahat$ on the product
manifold $\MXhat\times\MYhat$
~
\begin{equation}
  \kappahat 
= 
  \tfrac{1}{4} 
  \kappahat_{abcd}(x,y)
  d\xhat^a
  \wedge d\xhat^b
  \wedge d\yhat^c
  \wedge d\yhat^d
,
\label{Trans_kernal}
\end{equation}
so that
\begin{align}
\Pihat_{ab}(x) = \tfrac14\epsilon^{cdef}\int_{\MYhat}
\kappahat_{abcd}(x,y) ~  \Fhat_{ef}(y)
\,d\yhat^0\wedge d\yhat^1\wedge d\yhat^2\wedge d\yhat^3~,
\label{Trans_ker_Pi}
\end{align}
where $\MXhat$ and $\MYhat$ are two copies of $\Mhat$ corresponding to
the $\xhat$ and $\yhat$ coordinates respectively.  The calculation of
$\kappahat$ for a generic transformation is given in section
\ref{sch_Trans_Ker}, and for the one pole Lorentz oscillator in
section \ref{sch_OnePKer}.

%
%-----------------------------------------------------------------------
\subsection{Operator representation}
\label{sch_Trans_Opp}

In this section we show how the operators $\Psi$ and $\Phi$ are
transformed under diffeomorphism (to $\Psihat$ and $\Phihat$
respectively), and demonstrate the invariance of Maxwell's equations.
We start with the operator equation for the polarisation 2--forms $\Pi$
and $\Pihat$ on $\man$ and $\Mhat$ respectively given by
\eqref{intro_constitutive_oper_unhat} and
\eqref{intro_constitutive_oper_hat}.
The only assumption we make is that $\Psi$ is {\em Killing} and {\em
  Closed}, that is that it commutes with the Hodge dual and the
exterior derivative respectively, i.e.
%[
\begin{align}
\Psi ~ \Star = \Star ~ \Psi
\qquadand
\Psi ~ d = d ~ \Psi
.
\label{gen_Kill}
\end{align}
%]
A sufficient condition so that \eqref{gen_Kill}
holds is that $\Psi$ is created out of lie derivatives with respect to
Killing vectors. Since $L_t=L_{\pfrac{}{t}}$ is the Lie derivative
with respect to the Killing vector $\pfrac{}{t}$ then $d\,L_t=L_t\,d$
and $\Star\, L_t=L_t\,\Star$. Hence $\Psi$ given by
\eqref{intro_constitutive_Psi_Phi_uhat} satisfies \eqref{gen_Kill}.

\begin{theorem}
Let
%[
\begin{align}
\Psihat=
\morph^\pstar ~ \Psi ~ \Star^{-1}\morph^\mstar ~ \Starhat
,\quad
  \Phihat
=
  \epsilon_0
  \morph^\pstar \Psi\morph^\mstar 
 - \epsilon_0 ~  \Psihat
 + \morph^\pstar\Phi\morph^\mstar
\quadand
\Jhat=\Starhat^{-1}\morph^\pstar \Star \Jvec~,
\label{intro_Psihat_Phihat}
\end{align}
%]
then
%[
\begin{align}
  \Psi\morph^\mstar
  \left[
    d
    \Starhat
    \left( \epsilon_0 \Fhat + \Pihat \right)
   -
    \speedofL^{-2}
    \Starhat
    \Jhat
  \right]
=0~.
\label{intro_max_hat}
\end{align}
%]
\end{theorem}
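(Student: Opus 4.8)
The plan is to verify \eqref{intro_max_hat} by pulling the unhatted Maxwell equation \eqref{intro_Max_FG} back along $\morph$ and then rewriting the pulled-back objects in terms of the hatted quantities using the definitions \eqref{intro_Psihat_Phihat}. First I would apply $\morph^\pstar$ to the second Maxwell equation on $\man$, namely $d\Star(\epsilon_0\Ffield+\Pi)=\speedofL^{-2}\Star\Jvec$, and insert $\morph^\mstar\morph^\pstar=\Id$ in appropriate places so that the interior objects become $\Fhat=\morph^\pstar\Ffield$ and $\Pihat$. Since $\morph^\pstar$ commutes with $d$, this gives an equation of the schematic form $d\,(\morph^\pstar\Star\morph^\mstar)(\epsilon_0\Fhat+\Pihat)=\speedofL^{-2}(\morph^\pstar\Star\morph^\mstar)\Jvec\circ\ldots$, i.e. everything is expressed through the \emph{optical} Hodge dual $\morph^\pstar\Star\morph^\mstar$ rather than the Minkowski dual $\Starhat$ on $\Mhat$. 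The task is then to convert between these two duals.

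Next I would observe that the definition $\Psihat=\morph^\pstar\Psi\Star^{-1}\morph^\mstar\Starhat$ can be rearranged as $\morph^\pstar\Psi\Star^{-1}\morph^\mstar=\Psihat\,\Starhat^{-1}$, equivalently $\morph^\pstar\Psi\morph^\mstar=\Psihat\,\Starhat^{-1}\,\morph^\pstar\Star\morph^\mstar$ after inserting $\morph^\mstar\morph^\pstar$. Applying the operator $\Psi\morph^\mstar$ — equivalently $\morph^\pstar\Psi\morph^\mstar$ up to the leading pullback, and using that $\Psi$ is Killing and Closed so it commutes past both $\Star$ and $d$ — to the pulled-back Maxwell equation, the left side becomes $d\,\Starhat\,[\,\cdot\,]$ for some 2--form in the bracket, precisely because the composite $\morph^\pstar\Psi\morph^\mstar$ applied to $\morph^\pstar\Star\morph^\mstar$ collapses to $\Psihat\,\Starhat^{-1}$ composed with $\Starhat$, i.e. to $\Psihat$ acting after $\Starhat$. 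Carrying the bookkeeping through, the source term reorganizes into $\speedofL^{-2}\Starhat\Jhat$ with $\Jhat=\Starhat^{-1}\morph^\pstar\Star\Jvec$, which is exactly the stated definition. The $\epsilon_0\Fhat$ piece and the $\Pihat$ piece must be tracked separately: the $\Fhat$ term produces $\epsilon_0\,\morph^\pstar\Psi\morph^\mstar\Fhat$, which I would rewrite using the definition of $\Phihat$ as $\epsilon_0\Psihat\Fhat+(\Phihat-\morph^\pstar\Phi\morph^\mstar)\Fhat$, while the $\Pihat$ term, combined with the unhatted constitutive relation $\Psi\Pi=\Phi\Ffield$ pulled back, supplies the matching $\morph^\pstar\Phi\morph^\mstar\Fhat$, so these cancel and the bracket closes up to $d\Starhat(\epsilon_0\Fhat+\Pihat)$.

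The main obstacle I expect is the operator bookkeeping: keeping straight which operators commute past which, in particular exploiting \eqref{gen_Kill} to slide $\Psi$ through $d$ and $\Star$ while being careful that $\morph^\pstar$ and $\morph^\mstar=(\morph^{-1})^\pstar$ do \emph{not} in general commute with $\Star$ (that non-commutation is exactly what distinguishes $\Starhat$ from the optical dual), and ensuring that every insertion of $\Id=\morph^\mstar\morph^\pstar$ is placed so that the ``Killing and Closed'' hypotheses on $\Psi$ can actually be invoked. A secondary subtlety is that $\Psihat$ is defined with $\Starhat$ on the far right, so the identity \eqref{intro_max_hat} is not $d\Starhat(\epsilon_0\Fhat+\Pihat)-\speedofL^{-2}\Starhat\Jhat=0$ itself but that quantity with $\Psi\morph^\mstar$ applied in front; one should not expect to strip off $\Psi\morph^\mstar$ without invoking the no-plasma-resonance (injectivity) assumption, so I would state explicitly that the theorem as written is the ``weak'' form and that the genuine transformed Maxwell equation follows only after noting $\Psi\morph^\mstar$ is injective away from resonances — a point already flagged in the surrounding text. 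Once the commutation rules are applied cleanly, the remaining steps are routine form manipulations.
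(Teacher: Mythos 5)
Your proposal is correct and is essentially the paper's own argument run in the opposite direction: the same ingredients --- commuting $d$ with the pullbacks, the Killing/Closed properties \eqref{gen_Kill}, insertion of $\Star\Star^{-1}$ and $\morph^\mstar\morph^\pstar$, both constitutive relations, the definition of $\Phihat$ to cancel $\epsilon_0\Psihat\Fhat$, and the definition of $\Jhat$ --- appear in the same roles, and your closing remark about the weak form and the no-plasma-resonance assumption matches the paper's comment immediately after the theorem. One small wording slip: after pulling back the design Maxwell equation the polarization term is $\morph^\pstar\Pi$, not $\Pihat$ (the latter being defined by \eqref{intro_constitutive_oper_hat}), but your subsequent bookkeeping --- closing the bracket to $d\,\Starhat\left(\epsilon_0\Fhat+\Pihat\right)$ via $\Phihat\Fhat=\Psihat\Pihat$ --- implicitly makes the correct distinction, so this does not affect the validity of the plan.
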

Again assuming we avoid the plasma resonances, then \eqref{intro_max_hat} is
equivalent to \eqref{intro_max_hat_noJ}.

\begin{proof}

Let $\Gfield = \epsilon_0 \Ffield + \Pi$ and
$\Ghat=\epsilon_0 \Fhat + \Pihat$.
Recall that the exterior derivative commutes with the pull-backs
$\morph^\pstar$ and $\morph^\mstar$. Thus using \eqref{gen_Fhat},
\eqref{intro_constitutive_oper_unhat}, \eqref{intro_constitutive_oper_hat}
and \eqref{gen_Kill} in turn we have
%[
\begin{align*}
\Psi\morph^\mstar d & \Starhat ~ \left( \epsilon_0 \Fhat+\Pihat \right)
=
 \Psi d \morph^\mstar \Starhat ~ \Ghat
=
 d \Psi \morph^\mstar \Starhat ~ \Ghat
=
 d \Psi \Star \Star^{-1} \morph^\mstar \Starhat ~ \Ghat
=
 d \Star \Psi \Star^{-1} \morph^\mstar \Starhat ~ \Ghat
\\&=
 d \Star \morph^\mstar
 \left( \morph^\pstar \Star^{-1} \Psi \morph^\mstar \Starhat \right)
 \Ghat
=
 d \Star \morph^\mstar \Psihat \left( \epsilon_0 \Fhat + \Pihat \right)
=
  d \Star \morph^\mstar 
  \left( \epsilon_0 \Psihat \Fhat + \Psihat \Pihat \right)
\\&=
  d \Star \morph^\mstar 
  \left( \epsilon_0 \Psihat \Fhat + \Phihat \Fhat \right)
=
  d \Star \morph^\mstar
  \left( 
    \epsilon_0 \Psihat\Fhat
   +
    \left( 
      \epsilon_0  \morph^\pstar \Psi\morph^\mstar
     -
      \epsilon_0  \Psihat
     + 
      \morph^\pstar\Phi\morph^\mstar
    \right)
    \Fhat
  \right)
\\
&=
   d \Star \morph^\mstar
   \left( \epsilon_0  \morph^\pstar\Psi\morph^\mstar\Fhat
          +\morph^\pstar\Phi\morph^\mstar\Fhat
   \right)
=
  d \Star 
  \left(\epsilon_0  \Psi\morph^\mstar \Fhat+\Phi\morph^\mstar \Fhat \right)
\\
&=
  d \Star \left(\epsilon_0  \Psi \Ffield + \Phi F\right)
=
 d \Star \left(\epsilon_0 \Psi \Ffield + \Psi\Pi\right)
=
 d \Star \Psi \left(\epsilon_0 \Ffield + \Pi \right)
=
 d \Psi \Star \left(\epsilon_0 \Ffield + \Pi \right)
\\&=
 \Psi d \Star \left(\epsilon_0 \Ffield + \Pi \right)
=
  \speedofL^{-2}
   \Psi \Star \Jvec
=
  \speedofL^{-2}
   \Psi\morph^\mstar \morph^\pstar \Star \Jvec
=
  \speedofL^{-2}
   \Psi\morph^\mstar \Starhat\, \Jhat
.
\end{align*}
%]
\end{proof}

If we have the inverse of $\Psi$ (for non-resonant modes) then, by
comparison with \eqref{intro_chi}, we can write down an expression
for induced susceptibility
$\chihat(\omegahat,\khat,\tHat,\xhat)=\Psihat^{-1}\Phihat$, 
 where the polarisation $\Pihat=\chihat\Fhat$. 
Again preserving the order of
 $(\omegahat,\khat,\tHat,\xhat)$  we then have
%[
\begin{align}
  \chihat 
=
  \epsilon_0
  \left( \Zhat^{-1} - 1 \right)
 +
  \Psihat^{-1}
  \morph^\pstar
  \Phi
  \morph^\mstar
\quadtext{where}
  \Zhat
=
  \morph^\pstar
  \Star^{-1}
  \morph^\mstar
  \Starhat
\quadtext{so that}
  \Psihat
=
  \morph^\pstar
  \Psi 
  \morph^\mstar
  \Zhat~.
\label{gen_chihat}
%\label{intro_Zhat}
\end{align}
%]
However, as stated previously,
 the operator $\Psihat^{-1}$ is not a simple matrix
inverse of $\Psihat$ since the entries in $\Psihat$ depend on
$(\omegahat,\khat,\tHat,\xhat)$ which do not commute. For this article
we calculate the operators $\Psihat$ and $\Phihat$, which are the most
useful, and do not consider the operator $\Psihat^{-1}$.

Given $\morph$, one can
calculate the map
$\Zhat$ which takes 2--forms on $\Mhat$ to 2--forms on $\Mhat$ and may be
written with respect to components $\Zhat^{ab}_{cd}$ so
that for any 2--form $\alpha$, 
%[
\begin{align}
  \Zhat \alpha
=
  \tfrac14 \Zhat^{ab}_{cd}
  ~
  d\xhat^{c} \wedge d\xhat^{d} \wedge
  \left( i_{\xhat^a} ~ i_{\xhat^b} ~ \alpha \right)
=
  \morph^\pstar
  \Star^{-1}
  \morph^\mstar
  \Starhat
  \alpha
.
\label{gen_Zhat_components}
\end{align}
%]
The map $\Zhat$ is sufficient to encode any spatial transformation device,
 whether or not it has temporal dispersion,
 and all spacetime transformation devices without dispersion. 
The components of the constitutive tensor,
 $\Zhat^{ab}_{cd}$,
 may be written as a block $2\times 2$ tableau representing the permittivity, 
 the (inverse) permeability
 and the magneto-electric tensors (see \cite{McCall2011a}, Eq. (10)).

Using $\Phihat = \epsilon_0 \morph^\pstar \Psi \morph^\mstar
                - \epsilon_0 \Psihat
                + \morph^\pstar \Phi \morph^\mstar$ 
 and $\Psihat = \morph^\pstar ~ \Psi ~ \morph^\mstar ~ \Zhat$ 
 it is clear that the key step remaining in order to calculate $\Psihat$ and
 $\Phihat$ is to calculate the operators
 $\morph^\pstar\Psi\morph^\mstar$ and
 $\morph^\pstar\Phi\morph^\mstar$.  
This is achieved in section \ref{sch_OnePOpp}.

%
%-----------------------------------------------------------------------
\subsection{Integral kernel representation}
\label{sch_Trans_Ker}

In this section we show how the integral kernel $\kappa$ is
transformed into $\kappahat$ under the diffeomorphism $\morph$.
The map $\morph:\Mhat\to \man$ generalises to the maps $\morphX:\MXhat\to\MX$
and $\morphY:\MXhat\to\MY$. Likewise for $\PiX$ and $\PiY$ to give the
following commutative diagram:
%[
\begin{equation}
\begin{aligned}
\xymatrix{
&\MXhat\times\MYhat\ar[dl]_\PiXhat\ar[dr]^\PiYhat\ar@/^2pc/[rrr]^{\morphXY} & & &
\MX\ar[dl]_\PiX\times\MY\ar[dr]^\PiY &
\\
\MXhat \ar@/_2pc/[rrr]_{\morphX}&& \MYhat \ar@/_2pc/[rrr]_{\morphY}& \MX && \MY
}
\end{aligned}
\label{gen_com_diag}
\end{equation}
%]
where $\morphXY = \morphX \times \morphY$. Let
%[
\begin{align}
\kappahat
=
\ZhatX^{-1}\morphXY^\pstar
  \left( \epsilon_0 \Delta + \kappa \right)
 -
  \epsilon_0 \Deltahat
,
\label{gen_ker_kappahat}
\end{align}
%]
where
%[
\begin{align}
  \ZhatX^{-1} 
= 
  \left(
    \StarXhat^{-1} ~ \morphXY^\pstar ~ \StarX ~ \morphXY^\mstar
  \right)
,
\label{gen_ker_Z}
\end{align}
%]
and
$\Deltahat=\frac14 \epsilon_{abcd} \delta^{(4)}(x-y)
d\xhat^{a}\wedge d\xhat^{b}\wedge d\yhat^{c}\wedge d\yhat^{d}$.
We show in lemma \ref{lm_Delta_preserved} in the Appendix that
 $\Deltahat=\morphXY^\pstar\Delta$.
Here $\StarX$ means the Hodge dual applies only to the $dx^a$ components.
Similar to \eqref{gen_Zhat_components} we can calculate the components
 of $\ZhatX^{-1}$ so that for any 4--form $\alpha$,
 which has degree 2 with respect to both $\xhat^a$ and $\yhat^a$,
 i.e. $\alpha= \tfrac14 \alpha_{abcd} ~ d\xhat^{a}\wedge d\xhat^{b}\wedge
d\yhat^{c}\wedge d\yhat^{d}$. 
We have
%[
\begin{equation}
\begin{aligned}
(\ZhatX^{-1})\alpha
&=
\StarXhat^{-1} ~ \morphX^\pstar ~ \StarX ~ \morphX^\mstar ~ \alpha
=
\tfrac{1}{4}
  \left( \ZhatX^{-1} \right)^{ab}_{cd}
   ~ 
  d\xhat^{c} \wedge d\xhat^{d}
  \wedge
  \left( i_{\xhat^c} ~ i_{\xhat^d} ~ \alpha \right)
%%\nonumber
\\
&=
  \frac{1}{8}
  \left( \ZhatX^{-1} \right)^{ab}_{cd}\
  \alpha_{abef}
  ~
   d\xhat^{c} \wedge d\xhat^{d} \wedge d\yhat^{e} \wedge d\yhat^{f}
.
\end{aligned}
\label{gen_ZhatX_components}
\end{equation}
%]
Here the coefficients $(\ZhatX^{-1})^{ab}_{cd}$ are the same as
$(\Zhat^{-1})^{ab}_{cd}$, i.e. the inverse of the $\Zhat^{ab}_{cd}$
given in \eqref{gen_Zhat_components}. The proof that $\kappahat$ given
by \eqref{gen_ker_kappahat} satisfies Maxwell's equation
\eqref{intro_max_hat_noJ} is given lemma
\ref{lm_kappahat_solves_Maxwell} in the Appendix.
From $\Deltahat=\morphXY^\pstar\Delta$
we can write \eqref{gen_ker_kappahat} as
%[
\begin{align}
\kappahat
=
  \ZhatX^{-1}\morphXY^\pstar ~ \kappa
 + \epsilon_0 \ZhatX^{-1} \Deltahat 
 - \epsilon_0 \Deltahat
.
\label{gen_ker_kappahat_alt}
\end{align}
%]
Applying \eqref{gen_ZhatX_components} we can calculate the second term
in \eqref{gen_ker_kappahat_alt} as
%[
\begin{align}
  \ZhatX^{-1} ~ \Deltahat
=
  \frac{1}{8}
  \left (\ZhatX^{-1} \right)^{ab}_{cd} 
 ~ \delta^{(4)}(x-y)
  \epsilon_{abef}
  ~
  d\xhat^{c}
  \wedge d\xhat^{d}
  \wedge d\yhat^{e}
  \wedge d\yhat^{f}~.
\label{gen_ker_ZX_Delta_components}
\end{align}
%]
Thus the challenge is to calculate the 4--form
$\morphXY^\pstar ~ \kappa$. This is
achieved in section \ref{sch_OnePKer}.

%
%=======================================================================
\section{Device properties}\label{S-transformedpole}
%\label{ch_OneP}

Here we consider how a one-pole resonance is transformed under
 the diffeomorphism $\morph$.
 In the operator representation it is convenient to make the
 restriction that $t$ is transformed under $\morph$ to $\tHat$. This
 restriction is relaxed in the integral kernel.
Finally, 
 in Sec. \ref{sch_OnePconstraint},
 we consider the task of implementing a transformation
 in terms of the required material properties --
 i.e. the material dynamics, and its associated dispersion.
Because of the complexity of a situation 
 involving a general transformation, 
 we utilize some simplifying assumptions 
 to clarify which parts of the transformed dynamics
 are most important.

%
% ----------------------------------------------------------------------
\subsection{Operator representation}
\label{sch_OnePOpp}

As stated above, our goal is to calculate $\Psihat$ and $\Phihat$, so that the
polarisation 2--form $\Pihat$ satisfies \eqref{intro_constitutive_oper_hat}.
Let $\man$ have coordinates $(t,x,y,z)$ and $\Mhat$ have coordinates
$(\tHat,\xhat,\indevice{y},\indevice{z})$ with the diffeomorphism
\eqref{gen_phi} given by
%[
\begin{equation}
\begin{aligned}
&\morph^\pstar t=\tHat
, \quad
   \morph^\pstar x=x(\tHat,\xhat)
,\quad 
  \morph^\pstar y = \indevice{y}
\quadand
   \morph^\pstar z = \indevice{z}
,
\\
\text{and inverse relations}\quad
&\morph^\mstar \tHat=t\,,\quad\morph^\mstar \xhat=\xhat(t,x)
\,,\quad \morph^\mstar \indevice{y} = y
\quadand \morph^\mstar \indevice{z} = z~.
\end{aligned}
\label{tth_diffeo_coords}
\end{equation}
%]
Thus we have chosen to transform
 only between $(t,x)$ and $(\tHat,\xhat)$
 and furthermore we have set $t=\tHat$ so that we are
 transforming only in space,
 albeit in a time dependent manner.
Equation \eqref{tth_diffeo_coords} implies $\partial_x\,\tHat=0$,
$\partial_t\,\tHat=1$, $\partial_\xhat\, t=0$ and $\partial_\tHat\,
t=1$. Thus
%[
\begin{equation}
\begin{aligned}
&\partial_{x}
= (\partial_x\xhat)\, \partial_{\xhat}
\,,\qquad
\partial_{t}
= (\partial_{t} \xhat)\, \partial_{\xhat} +
\partial_{\tHat}~,
\\\qquadand&
\partial_{\xhat} = (\partial_{\xhat}{x})\, \partial_{x}
\,,\qquad
\partial_{\tHat} = (\partial_{\tHat}{x})\, \partial_{x} +
\partial_{t}
.
\end{aligned}
\label{tth_diffeo_partials_hat}
\end{equation}
%]
As stated in section \ref{sch_Trans_Opp} the key step to calculate
$\Psihat$ and $\Phihat$ is $\morph^\pstar\,\Psi\,\morph^\mstar$ and
$\morph^\pstar\,\Phi\,\morph^\mstar$. These are given by
%[
\begin{align}
\morph^\pstar\Psi\morph^\mstar=\morph^\pstar(L_t^2+\gamma L_t + \Plasf^2)\morph^\mstar
&=
\morph^\pstar L_t^2\morph^\mstar +\gamma \morph^\pstar L_t \morph^\mstar+ \Plasf^2~,
\label{OneP_Psibar}
\end{align}
%]
where
%[
\begin{align}
\morph^\pstar L_t\morph^\mstar
&=
L_\tHat + (\partial_t \xhat) L_\xhat
+
(\partial_\tHat\partial_t\xhat) d\tHat\wedge i_\xhat
+
(\partial_\xhat\partial_t\xhat) d\xhat\wedge i_\xhat~,
\label{tth_Lt}
\end{align}
%]
and
%[
\begin{equation}
\begin{aligned}
\morph^\pstar  L_t^2\morph^\mstar
=\,&
(\partial_t^2 \xhat) L_\xhat
+
L_\tHat^2
+
2(\partial_t \xhat) L_\tHat L_\xhat
+
(\partial_t \xhat)^2 L_\xhat^2
+
2(\partial_\tHat\partial_t\xhat)
d\tHat\!\wedge\! i_\xhat L_\tHat
+
2(\partial_\xhat\partial_t\xhat)
d\xhat\!\wedge\! i_\xhat L_\tHat
\\&+
2(\partial_t \xhat)(\partial_\tHat\partial_t\xhat)
d\tHat\!\wedge\! i_\xhat L_\xhat
+
2(\partial_t \xhat)(\partial_\xhat\partial_t\xhat)
d\xhat\!\wedge\! i_\xhat L_\xhat
\\&+
\left( (\partial_t\partial_\tHat\partial_t\xhat)
+(\partial_\tHat\partial_t\xhat) (\partial_\xhat\partial_t\xhat)
\right)
d\tHat\wedge i_\xhat
+
\left(
(\partial_t \partial_\xhat\partial_t\xhat)
+(\partial_\xhat\partial_t\xhat)^2
\right)
d\xhat\wedge i_\xhat~.
\end{aligned}
\label{tth_Lt_sqr}
\end{equation}
%]
Likewise
\begin{align}
\morph^\pstar\,\Phi\,\morph^\mstar
=
\Plasf^2
\left(
 d\tHat\wedge i_\tHat
+
(\partial_t\xhat) d\tHat\wedge i_\xhat
\right)
.
\label{OneP_Phibar}
\end{align}
%]
In \eqref{tth_Lt}-\ref{OneP_Phibar}, 
 we can interpret the action of `$d\xhat\wedge i_\tHat$'
 as taking the polarization-like parts of what it operates on 
 (i.e. $Z$)
 and applies them to the magnetization-like parts; 
 conversely `$d\tHat\wedge i_\xhat$'
 takes the magnetization parts
 and applies them to the polarization.
We have also taken expressions such as
$\partial_\tHat \partial_t \xhat$, which means first differentiate
$\xhat(t,x)$ with respect to $t$, then consider the resulting
expression as a function of $(\tHat,\xhat)$ and differentiate those
with respect to $\tHat$. Expanding this out, we
obtain\footnote{Note that the partial derivatives do not commute,
  $[\partial_\tHat,\partial_t]\ne0$. This is because $t$ and $\tHat$
  belong to different coordinate systems.}
%[
\begin{align}
\partial_\tHat \partial_t \xhat
=
\ppfrac{\xhat}{t}+\pfrac{x}{\tHat} \,\pqfrac{\xhat}{x}{t}~,
\label{tth_partial_tHat_t_xhat}
\end{align}
%]
In \eqref{tth_Lt_sqr} we have third derivatives. An example expanded
out becomes
%[
\begin{equation}
\begin{aligned}
\partial_t \partial_\tHat\partial_t\xhat
&=
\frac{\partial^3 \xhat}{\partial t^3}
+
\ppfrac{x}{\tHat}\,\pqfrac{\xhat}{x}{t}
+
\pfrac{\xhat}{t}\,\pqfrac{x}{\xhat}{\tHat}\,\pqfrac{\xhat}{x}{t}
+
\pfrac{x}{\tHat}\,\frac{\partial^3 \xhat}{\partial x\partial t^2}~,
\end{aligned}
\label{tth_partials_t_tHat_t_xhat}
\end{equation}
%]
The reason for the third derivative of the coordinate transformation is
that since $F$ is a 2--form, we need one derivative of the coordinate
transformation which needs to be differentiated twice more since $\Psi$
contains $L_t^2$. The proof of \eqref{OneP_Psibar}-\eqref{OneP_Phibar}
is given in the Appendix, where we no longer impose $t=\tHat$. The general
transformation $(t,x)$ to $(\tHat,\xhat)$ includes, for example, the curtain map introduced in the original proposal for the spacetime cloak \cite{McCall2011a}, which utilized a Lorentz boost.

%
%-----------------------------------------------------------------------
\subsection{Integral kernel representation}
\label{sch_OnePKer}

In the integral kernel representation, our goal is to calculate
$\kappahat$, so that the polarisation 2--form $\Pihat$ given by
\eqref{Trans_ker_Pi} satisfies Maxwell's equation
\eqref{intro_max_hat_noJ}.

\begin{figure}
\centering{
\setlength{\unitlength}{0.04\textwidth}
\begin{picture}(18,11.5)(-5,0)
\put(-5,1){\begin{picture}(4,4)
   \put(0,1){\vector(0,1){4}}
   \put(0,1){\vector(4,1){4}}
   \put(0,1){\vector(4,-1){4}}
   \put(-0.9,3){$\yhat^0$}
   \put(4,1.5){$\yhat^1$}
   \put(4,0){$\yhat^2,\yhat^3$}
\end{picture}}
\put(0,0){
   \includegraphics[width=10\unitlength]{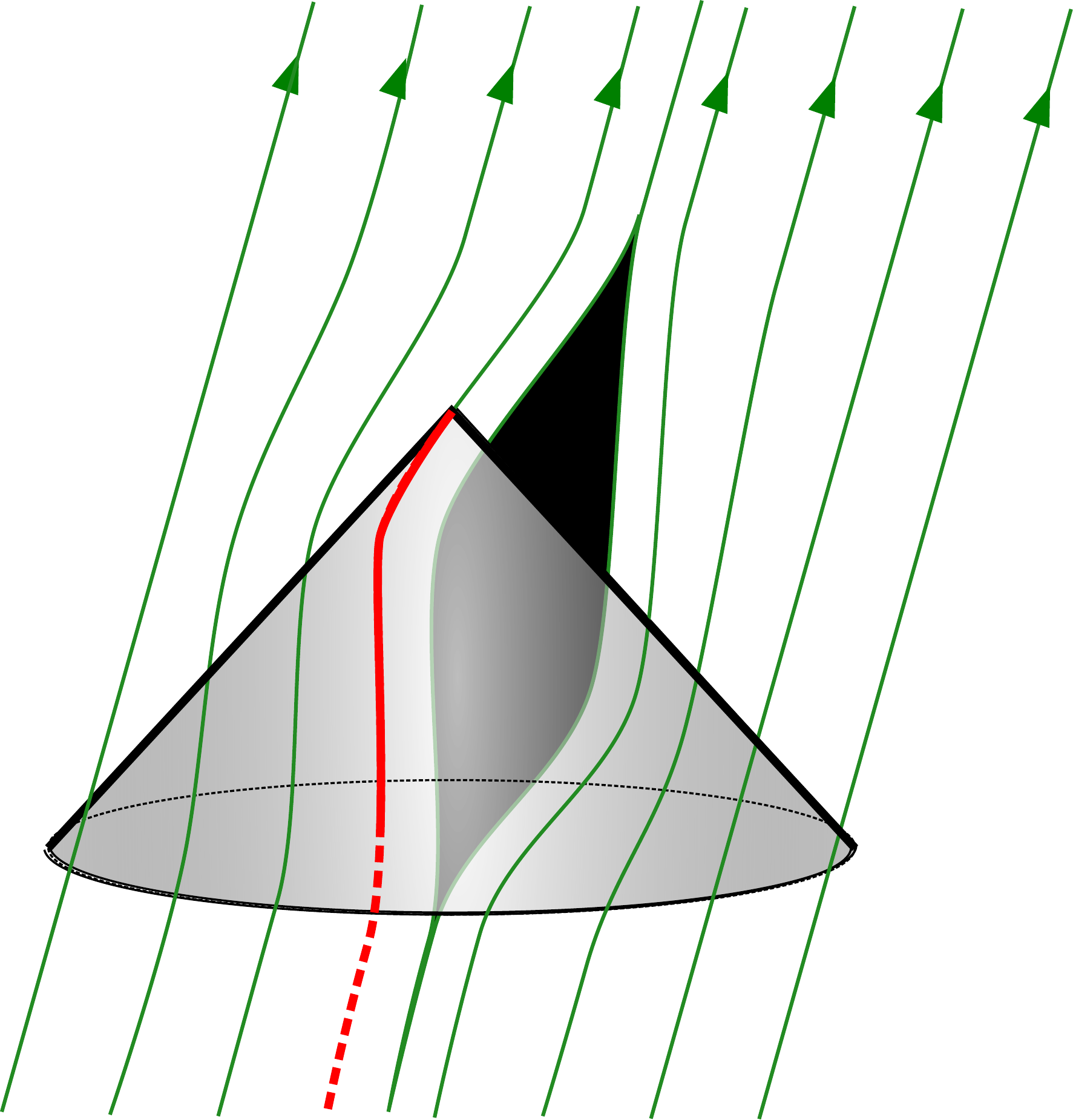}}
\put(3.4,6.8){$(\xhat^0,\underline{\xhat})$}
\end{picture}
}
\caption{The support (red) in $\MYhat$ of the kernel
  $\morphXY^\pstar\kappa$ and $\kappahat$
  \eqref{OnePKer_phistar_kappa}, for the point
  $(\xhat^0,\underline{\xhat})\in\MX$. Here the green lines correspond
  to the lines $\Set{\phi(y^0,\Vy)|y^0\in\Real}$ for fixed $\Vy$.}
\label{fig_kernel_support_real}
\end{figure}

The transformation of the kernel is much easier. We simply set
%[
\begin{align}
\morphXY^\pstar\kappa
=
\chi_0\,\frac{\theta \left( x^0-y^0 \right) \delta^{(3)}(\Vx-\Vy)}{2(\sigma_+-\sigma_-)}
 \left(  e^{\sigma_+(x^0-y^0)} - e^{\sigma_-(x^0-y^0)}  \right) 
\
\epsilon_{\lambda\nu\rho}\
dx^{0} \wedge dx^{\lambda} \wedge dy^{\nu}~ \wedge dy^{\rho}~,
\label{OnePKer_phistar_kappa}
\end{align}
%]
where we regard
%[
\begin{align}
x_a=x_a(\xhat)
\,,\quad
y_a=y_a(\yhat)
\,,\quad
dx^a=\pfrac{x^a}{\xhat^b} d\xhat^b
\quadand
dy^a=\pfrac{y^a}{\yhat^b} d\yhat^b~.
\label{OnePKer_phistar_kappa_x_dx}
\end{align}
%]
By substituting \eqref{OnePKer_phistar_kappa_x_dx} into
\eqref{OnePKer_phistar_kappa} we can generate
$\morphXY^\pstar\,\kappa$ and hence using \eqref{gen_ker_kappahat_alt}
we can generate $\kappahat$.  It is clear that having done this,
the components of $\kappahat$ will not be functions of
$\xhat-\yhat$. This is as expected since
$\kappahat$ is not the Fourier transform
of a function with arguments $(\omegahat,\khat)$. 
The support of $\kappahat$ in \eqref{OnePKer_phistar_kappa}
 is given in fig. \ref{fig_kernel_support_real}.

In this representation it is easy to see that
 we are effectively doing transformation optics
 intended to mimic a dispersive medium.
As described in section \ref{sch_Trans_Ker},
 we have restricted the design medium to be homogeneous and isotropic,
 with the causal and light cone structure of Minkowski spacetime.  
The integral kernel representation
 for more general dispersive design media entails complications
 that are beyond the scope of this initial work,
 and will form the basis of further study.

%
%-----------------------------------------------------------------------
\subsection{Practical Constraints}
\label{sch_OnePconstraint}

We expect it to be extremely challenging to construct the
 spatio-temporal dispersive medium required for
 perfect event cloaking,
 and indeed this will likely even be true of simpler
 spacetime transformation devices \cite{McCall2013a}.
However,
 in practise we will likely require only that its performance
 is better than some chosen benchmark.
The question then to be answered here is a more practical one --
 if we only have ordinary temporally dispersive materials available,
 and we wish to keep the cloaking within some performance range,
 what constraints does that put on the
 device design and operating parameters?
Here we will consider how the transformation gradients
 induced by a design morphism,
 as indicated by basic scales and sizes of the device,
 affect the magnitude of the additional terms
 that appear for the transformed medium, 
 as calculated in \ref{sch_OnePOpp}.

We assume a morphism involving a time-dependent spatial distortion,
 but with matched time coordinates (i.e. $t=\indevice{t}$).
We assume a plane-polarized light,
 the light (of course) being purely transverse in nature;
 it is a quasi continuous wave,
 centered about a frequency $\omega$,
 and with wavelength in the design medium of $\lambda$
 and wavevector $k=\lambda/2\pi$.
Specifically, 
 we assume the electric field 
 has the carrier-envelope form 
 given in \eqref{eqn-fieldenvelope}; 
 and
 although it is now common to allow for a 
 wideband field envelope \cite{Genty-KKD-2008oe},
 here we make the more stringent SVEA discussed above.

Before assuming that the transformation is gradual
 (i.e. making the GTA), 
 let us first consider the likely scope and deformation
 of spacetime points caused by a cloaking morphism, 
 taking the experiment of Fridman et al. \cite{Fridman2012}
 as a guide.
For a 1D waveguide based event cloak,
 there are only three important parameters --
 the distance $L$ over which the cloak opens and closes,
 the maximum shift $\Delta$ in position of any point,
 and the average speed of light $\speedofL_o$ within the cloak.
Using \cite{Fridman2012}, 
 we therefore set $L=200$m
 and $\Delta=5$mm.
These distances are proportional to the time scales over which
 the cloak operates:
 it has a total duration of $T = L / \speedofL_o \sim 1 \mu$s,
 and opens a `time gap' of $\tau = \Delta / \speedofL_o \sim 25$ps.

Subject to these assumptions, 
 we can write out
 the morphed/device polarization response/operator
 using \eqref{tth_Lt} and  \eqref{tth_Lt_sqr}
 in \eqref{intro_constitutive_uhat}.
Before then, 
 however, 
 we will consider the prefactors of each term, 
 each of which relates to derivatives of the morphism,
 and estimate their sizes.
This will make it clear which terms we might need to keep,
 and which to discard:
 obviuosly we expect that the first order terms will be most significant, 
 we also calculate the second order ones for completeness.
These prefactors,
 with $\indevice{t} \equiv \indesign{t}$,
 are
~
\begin{align}
  a_1 &= \partial_{t} {\indevice{x}}
&\sim
  \Delta / T
&\sim   5 \times 10^{-3} / 10^{-6}
 \qquad 
&\simeq&~~
 ~5\times 10^{3}
 ~~\textrm{[m/s]}~,
 %\sim    10^{-5} \speedofL
 \label{eqn-constraint-a1number}
\\
  b_1 &= \partial_{t}^2 {\indevice{x}}
&\sim
  \Delta / T^2
&\sim   5 \times 10^{-3} / 10^{-12}
 \qquad 
&\simeq&~~
  ~5 \times 10^{9}~,
 ~~\textrm{[m/s$^2$]}
 \label{eqn-constraint-b1number}
\\
  b_2 &= \partial_{x} \partial_{t} {\indevice{x}}
&\sim
  \Delta / L T
&\sim   5 \times 10^{-3} / 200\times 10^{-6}
&\simeq&~~
  25
 ~~\textrm{[1/s]}~
%,
 \label{eqn-constraint-b2number}
%\\
%  h_1 &= \partial_{t}^3 {\indevice{x}}
%&\sim
%  \Delta / T^3
%&\sim   5 \times 10^{-3} / 10^{-18}
% \qquad 
%&\simeq&~~
%  5 \times 10^{15}~,
% ~~\textrm{[m/s$^3$]}
%,
% \label{eqn-constraint-h1number}
%\\
%  h_2 &= \partial_{t} \partial_{x} \partial_{t} {\indevice{x}}
%&\sim
%  \Delta / L T^2
%&\sim   5 \times 10^{-3} / 200\times 10^{-12}
% \qquad 
%&\simeq&~~
%  25 \times 10^{6}~,
% ~~\textrm{[1/s$^2$]}
% \label{eqn-constraint-h2number}
.
\end{align}
We now need to combine these with the properties
 of the design medium and the electromagnetic field, 
 and compare the results to the benchmark frequency scale.
Consequently, 
 let us now assume that our cloak
 is also trying to mimic a medium with resonant frequency
 of about 300THz
  (i.e. $\Res \sim 2\times 10^{15} \textrm{rad/s}$),
 and a decay rate of nanoseconds
  ($\Damp \sim 10^{9} \textrm{s}^{-1}$).
We further assume that the illumination frequency is comparable
 to this resonant frequency,
 at $200THz$ (1432nm)
 (i.e. $\omega_0  \sim 1.4 \times 10^{15} \textrm{rad.s}^{-1}$),
 we therefore also know that allowing for
 the refactive index of silica, 
 that
 $k_0 = \omega_0 / \speedofL_{\textrm{silica}} \sim 0.7 \times 10^{7}$.
Thus
~
\begin{align}
  k a_1 / \omega
&\sim
  4 \times 10^{10}
  \textrm{~rad/s}
  / 1.4 \times 10^{15} \textrm{~rad/s}
~&\simeq&~~
  ~3 \times 10^{-5}
,
\label{eqn-constraint-ka1}
\\
  2 \pi
  b_1 / \speedofL \omega
&\sim
  2 \pi \times
  5 \times 10^{9}
  \textrm{~m/s}^2
  / 4 \times 10^{23} \textrm{~m.rad}\textrm{/s}^2
~&\simeq&~~
 ~7 \times 
 10^{-14}
 %\textrm{/rad}
,
\label{eqn-constraint-kb1}
\\
  b_2 / \omega
&\sim
  25
  \textrm{~rad}\textrm{/s}
  / 1.4 \times 10^{15} \textrm{~rad/s}
~&\simeq&~~
  ~2 \times 10^{-14}
%,
\label{eqn-constraint-kb2}
%\\
%  h_1 / \speedofL \omega^2
%&\sim
%  5 \times 10^{15}
%  \textrm{m}\textrm{/s}^3
%  / 6 \times 10^{38} \textrm{~m.rad}\textrm{/s}^2
%~&\simeq&~~ 
%  %1 \times 
%  10^{-23}
%,
%\label{eqn-constraint-kh1}
%\\
%  h_2 / \omega^2
%&\sim
%  25 \times 10^{6}
%  \textrm{/s}^2
%  / 2 \times 10^{30} \textrm{~rad}^2\textrm{/s}^2
%~&\simeq&~~ 
% % 1 \times 
% 10^{-23}
% \textrm{/rad}
.
%\label{eqn-constraint-kh2}
\end{align}

Clearly, 
 by far
 the most significant correction here is the first order $a_1$ term, 
 which appears in \eqref{tth_Lt} and  \eqref{tth_Lt_sqr}
 in concert with the spatial derivative $L_{\indevice{x}}$; 
 keeping only these is consistent with the GTA.
The very small size of the second order terms also assures us 
 that ignoring the third order terms was reasonable. 
This means that the $a_1$-dependent
 additional dynamics of the polarization field
 is dependent on spatial gradients,
 that is,
 that each (originally independent)
 Lorentz oscillator now is affected by its neighbourhood.
However, 
 since this dominant correction is of order $10^{-5}$, 
 we can see that the Fridman et al. experiment \cite{Fridman2012}
 was safely in the regime where the spacetime effects discussed
 in this paper are negligible.
However, 
 if we hoped to extend the cloaking interval to beyond $25$ns instead of ps
 ($\Delta=5$m)
 without also extending the total cloaking length $L$,
 then we could be pushing the limitations of the GTA, 
 and more care would be required.

Nevertheless, 
 retaining the $a_1$ parts as being the primary (GTA) corrections, 
 we can now write a corrected (transformed)
 differential operator for the polarization dynamics,
 which is
~
\begin{align}
  \indevice{\Psi}_{\textrm{GTA}}
&=
  \left(
    L_{\indevice{t}}
   +
    a_1
    L_{\indevice{x}}
  \right)^2
% +
%  b_1
%  L_{\indevice{x}}
 +
  \gamma
  \left(
    L_{\indevice{t}}
   +
    a_1
    L_{\indevice{x}}
  \right)
 +
  \Res^2
.
\end{align}
Under this level of approximation, 
 we can see that although the device medium has spatial dispersion 
 in addition to the required temporal dispersion, 
 there is no dynamic magnetoelectric cross-coupling between 
 the polarization part of $\indevice{Z}$
 and its magnetization part (and vice versa).
However, 
 the transformed driving term $\indevice{\Phi}$
 in \eqref{OneP_Phibar}
 does have a magnetic driving contribution of order $a_1$; 
 the polarization and magnetization thus evolve independently, 
 but with matching dynamics.

Although an analysis of the differential equation governing
 the material response is perhaps the most physical approach
 \cite{Kinsler-2011ejp},
 it is rather more common to look at the material response
 in the frequency/wavevector domain,
 and consider the dispersion relations.
To get these
 we use the SVEA, 
 where the nearly CW nature of the illuminating field
 means that 
 its effect on the polarization is dominated by the
 exponential carrier oscillations,
 i.e. by $\omega_0$ and $k_0$.
It can now be easily seen that
\begin{align}
  \morph^\pstar\Psihat\morph^\mstar 
&=
 -
  \left(\omegahat-a_1\khat\right)^2 
 +
  i
  \Damp
  \left(\omegahat-a_1\khat\right)
 +
  \Res^2\,,
\label{Appr_morphPhi}
\end{align}
so that
\begin{align}
  \chihat 
&=
  \epsilon_0 \, 
  \left( \Zhat^{-1} - 1\right)
 +
  \frac{\Plasf^2}
       {-\left(\omegahat-a_1\khat\right)^2 
        +i\Damp\left(\omegahat-a_1\khat\right)
        +\Res^2
       }
  ~.~
  \Zhat^{-1}
  \left(
    d\tHat\wedge i_\tHat
   +
    a_1 d\tHat\wedge i_\xhat
  \right)
.
\label{Appr_chihat}
\end{align}
The components of the inverse map $\Zhat^{-1}$
 can easily be calculated. 
For example,
 from the first term in \eqref{Appr_chihat},
 we see the electric field in the $\yhat$-direction
 gives rise to a magnetisation in the $\zhat$-direction
 as well as a polarisation in the $\yhat$-direction since
%[
\begin{align}
  \Zhat^{-1}(d\tHat\wedge d\yhat)
=
  (\partial_\tHat x)(d\xhat\wedge d\yhat)
 +
  (\partial_\xhat x)(d\tHat\wedge d\yhat)
\label{Appr_Zhat}
,
\end{align}
%]
 where we should note that the derivatives here are of $x$, 
 and not of $\xhat$ as previously
 (i.e. as for $a_1$, $b_1$, $b_2$).
Thus
%[
\begin{align}
  \epsilon_0 \,d\tHat\wedge d\yhat + \chihat(d\tHat\wedge d\yhat)
&=
  \left[
    \epsilon_0
   +
    \frac{\Plasf^2}
         {-\left(\omegahat-a_1\khat\right)^2
          +i\Damp\left(\omegahat-a_1\khat\right)
          +\Res^2
          }
  \right]
\nonumber
\\
&\qquad\qquad\qquad\qquad \times
  \left[
     (\partial_\tHat x)(d\xhat\wedge d\yhat)
    +
     (\partial_\xhat x)(d\tHat\wedge d\yhat)
  \right]
\label{Appr_chihat_elec}
\end{align}
%]
We emphasize again,
 that despite the application of both the GTA and the SVEA,
 the resultant medium is still spatially and temporally dispersive
 and magneto-electric.

As we would expect, 
 this is more complicated than that of the simple one pole resonance 
 specified by the design requirement in \eqref{intro_one_pole_resonance}.
We can see that
 the morphism induces
 an effective frequency shift $\propto a_1 k_0$,
 and the loss term
 is augmented by an induced evanescence $\propto \gamma a_1$;
 we now have spatial effects which can be considered
 as spatial dispersion.
For a symmetric cloak,
 we would expect that the $a_i$ coefficient
 would change sign at the half-way point,
 for zero net effect.
In a perfect cloak we will expect the effect of the modified medium 
 to cancel out --
 indeed that is the point of the transformation design process --
 but in any imperfect implementation
 that may not be the case.

Although it is easy enough to calculate the sizes of these 
 morphism-induced changes to the medium, 
 and discuss the significance of their magnitudes, 
 it is less easy to infer their integrated effect 
 as some illuminating field passes through the device medium.
In fig. \ref{fig-diagram} we indicated
 that a spacetime cloak built without the additional material dynamics
 would hide the event,
 and not return any evidence of what it was,
 but that --
 for pictorial purposes, 
 at least --
 a diverted illuminating pulse would
 end up with the wrong chirp.
This mismatch could raise suspicions
 that all was not as it should seem.

However,
 what the signature would be in practise is less clear,
 as the effect on the illuminating field at any given position
 depends on the current local state of that medium.
Further, 
 that state is {also} an integral over its past, 
 a past that involves its internal dynamics, 
 the propagating and changing field,
 \emph{and} an evolving morphism.
Since a detailed simulation of the polarization and field dynamics
 is beyond the scope of this paper,
 we instead have focussed on the magnitudes of the non-Lorentzian $a_1$ terms,
 and whether or not the effects of neglecting them will or will not
 be significant.
This is because
 from a practical perspective,
 they mean we have the challenging task of constructing a metamaterial
 that naturally follows such a non-Lorentzian dynamics,
 but we would prefer to keep our device simpler and more achievable.

%
%=======================================================================
\section{Conclusion}\label{S-conclusion}

In this paper we have
 for the first time
 accounted for dispersion in spacetime cloaking. 
Since dispersion is common to all media
 through which electromagnetic waves propagate,
 this advance must be regarded as significant. 
In particular,
 we have given
 the design recipe for fooling an observer
 into thinking that the electromagnetic signal
 she receives has travelled through a uniform one-pole resonance medium,
 whereas in reality,
 the electromagnetic signal has been distorted and reformed
 so as to permit events to occur undetected for a brief period.

As we have seen,
 the problem presented a considerable mathematical challenge,
 one that we have addressed using both 
 an operator and an integral kernel method, 
 applied within a framework which treats
 the electromagnetic field as a differential form. 
Some simplifying assumptions enabled us to make further progress. 
We identified the GTA and SVEA as being approximations that allow 
 us to ignore several of the terms generated under spacetime transformation. 
Nevertheless,
 even when both these approximations are operative, 
 the inherent nature of the transformation is such that space and time
 become mixed,
 so that the medium prescribed by the transformation
 is one that must have
 defined inhomogeneous temporal and inhomogeneous spatial dispersion.
The metamaterial design will still be a considerable technological challenge. 
Rather than attempt to address how these challenges can be met, 
 we instead focussed on using our results to estimate 
 the likely impact of imperfections on practical spacetime cloaks. 
We did this by estimating the size of various terms using 
 the experimental parameters of 
 Fridman et al's spacetime cloak \cite{Fridman2012}
 as a guide. 
We found that the dominant dynamics of the induced polarization field 
 depend on spatial gradients,
 and that at this level of approximation magneto-electric effects 
 can be ignored.

Our rather complete analysis of the problem of achieving spacetime cloaking
 using dispersive media will likely assist experimentalists
 in their design of spacetime cloaks. 
A key step achieved here is that in each case the limitations 
 of any practical design can be quantitatively assessed.

\section*{Acknowledgements}
PK and JG are grateful for the support provided by STFC (the
Cockcroft Institute ST/G008248/1) and EPSRC (the Alpha-X project
EP/J018171/1).
RTT was grateful for the support from the
 Royal Society of New Zealand
 through the Marsden Fund Fast Start Grant no. UOO1219.
MWM is grateful for support from EPSRC
 grant number EP/K00305/1.

\section*{References}

%\bibliographystyle{iopart-num} %{unsrt}
%\bibliography{../../References/dispersionreferences}

\providecommand{\newblock}{}

%
%=======================================================================
\appendix

\section{Additional Lemmas}

In this appendix we demonstrate the results used in this paper. We
use some additional notation. We use multi-indices on forms so that
$dx^{ab}=dx^a\wedge dx^b$.

%]
\begin{lemma}
\label{lm_ker_soln}
The kernel given by \eqref{intro_ker_kappa} satisfies the operator
equation \eqref{intro_ker_green}.
\end{lemma}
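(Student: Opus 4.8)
The plan is to reduce the claim to a one-dimensional distributional identity. First I would note that each of the basis $1$--forms $dx^0$, $dx^\lambda$, $dy^\nu$, $dy^\rho$ is annihilated by $L_{\partial_{x^0}}$ (indeed $L_{\partial_{x^0}}dx^a=d(i_{\partial_{x^0}}dx^a)=0$ and likewise for the $dy$'s), so that $L_{x^0}$ acts on $\kappa$ in \eqref{intro_ker_kappa} purely by differentiating its scalar coefficient with respect to $x^0$. Writing that coefficient as $\tfrac12\Plasf^2\,G(x^0-y^0)\,\delta^{(3)}(\Vx-\Vy)$ with $G(s)=\theta(s)\big(e^{\sigma_+s}-e^{\sigma_-s}\big)/(\sigma_+-\sigma_-)$, and using that $\delta^{(3)}(\Vx-\Vy)$ is inert under $\PsiX$, we obtain
\begin{align*}
\PsiX\kappa
=
\tfrac12\Plasf^2\,\big[(\partial_s^2+\Damp\,\partial_s+\Res^2)G\big](x^0-y^0)\;\delta^{(3)}(\Vx-\Vy)\;\epsilon_{\lambda\nu\rho}\;dx^0\wedge dx^\lambda\wedge dy^\nu\wedge dy^\rho .
\end{align*}

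Second, I would verify the one-dimensional identity $(\partial_s^2+\Damp\,\partial_s+\Res^2)G(s)=\delta(s)$. By \eqref{intro_ker_mu_roots} the numbers $\sigma_\pm$ are exactly the roots of $\sigma^2+\Damp\sigma+\Res^2=0$, so $e^{\sigma_\pm s}$ is annihilated by the operator; moreover $e^{\sigma_+s}-e^{\sigma_-s}$ vanishes at $s=0$ and has derivative $\sigma_+-\sigma_-$ there. Hence differentiating $\theta$ produces no $\delta$ at first order and exactly one $\delta(s)$, with coefficient $1$, at second order, leaving a $\theta$--supported remainder that the operator kills. This gives $\PsiX\kappa=\tfrac12\Plasf^2\,\delta^{(4)}(x-y)\,\epsilon_{\lambda\nu\rho}\,dx^0\wedge dx^\lambda\wedge dy^\nu\wedge dy^\rho$.

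Third, I would compute the right-hand side directly from \eqref{intro_ker_Delta}: since $i_{\partial/\partial x^0}dx^a=\delta^a_0$ and $i_{\partial/\partial x^0}dy^c=0$, contracting $\tfrac14\epsilon_{abcd}\,dx^a\wedge dx^b\wedge dy^c\wedge dy^d$ gives two equal terms by the antisymmetry of $\epsilon_{abcd}$, so that
\begin{align*}
\PhiX\Delta
=
\Plasf^2\,dx^0\wedge i_{x^0}\Delta
=
\tfrac12\Plasf^2\,\delta^{(4)}(x-y)\,\epsilon_{0bcd}\;dx^0\wedge dx^b\wedge dy^c\wedge dy^d .
\end{align*}
Since the index $0$ forces $b,c,d$ spatial and $\epsilon_{0\lambda\nu\rho}$ is then the three-dimensional Levi-Civita symbol $\epsilon_{\lambda\nu\rho}$, this is precisely the expression obtained above for $\PsiX\kappa$, establishing \eqref{intro_ker_green}.

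The only delicate point I anticipate is the distributional bookkeeping in the second step: one must check that pushing the jump of $\theta(s)$ through $\partial_s^2$ yields exactly one $\delta(s)$ with the right coefficient and no surviving $\delta'(s)$, which holds precisely because $e^{\sigma_+s}-e^{\sigma_-s}$ vanishes at the origin. The index manipulations on the $\PhiX\Delta$ side, and the invariance of the basis forms under $L_{x^0}$, are routine.
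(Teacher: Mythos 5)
Your proposal is correct and follows essentially the same route as the paper's own proof: a direct distributional verification that $\PsiX$ acting on the scalar coefficient of $\kappa$ produces $\tfrac12\Plasf^2\,\delta^{(4)}(x-y)\,\epsilon_{\lambda\nu\rho}\,dx^0\wedge dx^\lambda\wedge dy^\nu\wedge dy^\rho$, matched against the contraction computation for $\PhiX\Delta$. Your packaging of the key step as the standard one-dimensional Green's function identity $(\partial_s^2+\Damp\,\partial_s+\Res^2)\,\theta(s)\bigl(e^{\sigma_+s}-e^{\sigma_-s}\bigr)/(\sigma_+-\sigma_-)=\delta(s)$ is just a tidier statement of the same Leibniz-plus-$\delta$-identities calculation the paper performs explicitly, including the crucial cancellation of the $\delta'$ term because the exponential difference vanishes at $s=0$.
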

\begin{proof}
From \eqref{intro_constitutive_Psi_Phi_uhat}
the operators
%[
\begin{align*}
\PsiX=L_{x^0}^2 + \gamma L_{x^0} + \Res^2
\qquadand
\PhiX=\Plasf^2\, d{x^0}\wedge i_{x^0}
\end{align*}
%]
Acting $\PsiX$ on $\kappa$ gives
%[
\begin{align*}
&\PsiX\,\kappa/\Plasf^2
\nonumber
\\
&=
\left(L_{x^0}^2 + \gamma L_{x^0} + \Res^2\right)
\nonumber
\\
&\qquad \times 
  \left(
    %\Plasf^2\,
    \frac{\theta(x^0-y^0)\,\delta^{(3)}(\Vx-\Vy)}
         {2(\sigma_+-\sigma_-)}
    \left( e^{\sigma_+(x^0-y^0)} - e^{\sigma_-(x^0-y^0)} \right)
   \epsilon_{\mu\nu\rho}\
   dx^{0\mu}\wedge dy^{\nu\rho}
  \right)
\\&=
  % \Plasf^2
  \frac{\delta^{(3)}(\Vx-\Vy)}
       {2(\sigma_+-\sigma_-)}
  \,
  \left( L_{x^0}^2 + \gamma L_{x^0} + \Res^2 \right) 
  \left(
    \theta(x^0-y^0)
     \left(  e^{\sigma_+(x^0-y^0)} - e^{\sigma_-(x^0-y^0)}  \right) 
  \right)
  \epsilon_{\mu\nu\rho}\
  dx^{0\mu}\wedge dy^{\nu\rho}
\\&=
  %\Plasf^2
  \frac{\delta^{(3)}(\Vx-\Vy)}
       {2(\sigma_+-\sigma_-)}\,
\left[
  \theta(x^0-y^0)
  \left( L_{x^0}^2 + \gamma L_{x^0} + \Res^2 \right)
  \left(
    e^{\sigma_+(x^0-y^0)} - e^{\sigma_-(x^0-y^0)} 
  \right)
\right.
\\&\qquad\qquad\qquad\qquad
+
  \left(
    \left( L_{x^0}^2 + \gamma L_{x^0} \right)
    \theta(x^0-y^0)
  \right)
    \left(
      e^{\sigma_+(x^0-y^0)} - e^{\sigma_-(x^0-y^0)} 
   \right)
\\&\qquad\qquad\qquad\qquad
\left.
+
2 \left( L_{x^0}\theta(x^0-y^0) \right) 
  \left( L_{x^0}
    \left( e^{\sigma_+(x^0-y^0)} - e^{\sigma_-(x^0-y^0)} \right)
  \right)
\right]
\epsilon_{\mu\nu\rho}\
dx^{0\mu}\wedge dy^{\nu\rho}
\\&=
  %\Plasf^2
\frac{\delta^{(3)}(\Vx-\Vy)}{2(\sigma_+-\sigma_-)}\,
\left[
  \left(
    \delta'(x^0-y^0)+\gamma\delta(x^0-y^0)
  \right)
  \left(
    e^{\sigma_+(x^0-y^0)} - e^{\sigma_-(x^0-y^0)} 
  \right)
\right.
\\&\qquad\qquad\qquad\qquad
\left.
 +
  2
  \left( \delta(x^0-y^0) \right)
  \left( L_{x^0}
    \left( e^{\sigma_+(x^0-y^0)} - e^{\sigma_-(x^0-y^0)} \right)
  \right)
\right]
\epsilon_{\mu\nu\rho}\
dx^{0\mu}\wedge dy^{\nu\rho}
\\&=
  %\Plasf^2
  \frac{\delta^{(3)}(\Vx-\Vy)}
       {2(\sigma_+-\sigma_-)}\,
  \left[
    \left( \delta(x^0-y^0) \right)
    \left( L_{x^0}
      \left(
        e^{\sigma_+(x^0-y^0)} - e^{\sigma_-(x^0-y^0)}
      \right)
    \right)
  \right]
  \epsilon_{\mu\nu\rho}\
  dx^{0\mu} \wedge dy^{\nu\rho}
\\&=
  %\Plasf^2
  \frac{\delta^{(3)}(\Vx-\Vy)}
       {2} \,
  \delta(x^0-y^0)
  \epsilon_{\mu\nu\rho}\
  dx^{0\mu}  \wedge dy^{\nu\rho}
=
  \tfrac12
  \epsilon_{\mu\nu\rho}
  %\Plasf^2
  \delta^{(4)}(x-y)
  dx^{0\mu} \wedge dy^{\nu\rho}
\end{align*}
%]
where we have used the identity that $\delta(z)f(z)=0$ and
$\delta'(z)f(z)=-f'(0)$ for any functions $f(z)$ such that $f(0)=0$.
Acting $\PhiX$ on $\Delta$ gives
%[
\begin{align*}
\PhiX\,\Delta
&=
\tfrac14 \epsilon_{abcd} \PhiX\,
\delta^{(4)}(x-y) dx^{ab}\wedge dy^{cd}
=
\tfrac14 \epsilon_{abcd} \Plasf^2\, \delta^{(4)}(x-y)
d{x^0}\wedge i_{x^0}
(dx^{ab}\wedge dy^{cd})
\\&=
\tfrac14 \epsilon_{abcd} \Plasf^2\, \delta^{(4)}(x-y)
d{x^0}\wedge (\delta_0^a dx^b - \delta_0^b dx^a)\wedge dy^{cd}
\\&=
\tfrac12 \epsilon_{0bcd} \Plasf^2\, \delta^{(4)}(x-y)
d{x^{0b}}\wedge dy^{cd}
=
\tfrac12 \epsilon_{\mu\nu\rho} \Plasf^2\, \delta^{(4)}(x-y)
d{x^{0\mu}}\wedge dy^{\nu\rho}
\end{align*}
%]
Hence \eqref{intro_ker_green}.

\end{proof}

%%%%%%%%%%%%%%%%%%%%%%%%%%%%%%%%%%%%%%%%%%%%%%%%%%%%%%%%%%%%%%%%%%%%%%

\begin{lemma}
\label{lm_Delta_preserved}
The 4--form distribution $\Delta$ is preserved under the map
$\morphXY:\MXhat\times\MYhat\to\MX\times\MY$, that is
%[
\begin{align}
\morphXY^\pstar\Delta=\Deltahat
\label{gen_ker_phistar_Delta}
\end{align}
%]
\end{lemma}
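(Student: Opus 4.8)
The plan is to verify \eqref{gen_ker_phistar_Delta} by a direct computation in coordinates, showing that the Jacobian factors produced when the form part of $\Delta$ is pulled back cancel exactly against the Jacobian factor produced when the Dirac distribution $\delta^{(4)}(x-y)$ is pulled back, leaving precisely $\Deltahat$.

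First I would fix coordinates $(\xhat,\yhat)$ on $\MXhat\times\MYhat$ and $(x,y)$ on $\MX\times\MY$, and use that, by the commutative diagram \eqref{gen_com_diag}, $\morphX$ and $\morphY$ are two copies of the \emph{single} morphism $\morph$; hence $x^a=x^a(\xhat)$ and $y^a=y^a(\yhat)$ are given by the \emph{same} functions of their respective arguments. Pulling back the basis $1$--forms gives $\morphXY^\pstar dx^a=\pfrac{x^a}{\xhat^p}\,d\xhat^p$ and $\morphXY^\pstar dy^c=\pfrac{y^c}{\yhat^r}\,d\yhat^r$, so
\begin{align}
\morphXY^\pstar\Delta
=
\tfrac14\,\epsilon_{abcd}\,\delta^{(4)}(x(\xhat)-y(\yhat))\,
\pfrac{x^a}{\xhat^p}\pfrac{x^b}{\xhat^q}\pfrac{y^c}{\yhat^r}\pfrac{y^d}{\yhat^s}\;
d\xhat^p\wedge d\xhat^q\wedge d\yhat^r\wedge d\yhat^s .
\label{plan-Deltapullback}
\end{align}

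Next I would handle the distribution. Since $\morph$ is a diffeomorphism, $\delta^{(4)}(x(\xhat)-y(\yhat))=\delta^{(4)}(\morph(\xhat)-\morph(\yhat))$ is supported exactly on the diagonal $\xhat=\yhat$ --- this is the precise point at which it matters that the \emph{same} map acts on both factors --- and pairing against a test density and changing variables $u=\morph(\xhat)$ yields the distributional identity $\delta^{(4)}(\morph(\xhat)-\morph(\yhat))=\big|\det(\partial x/\partial\xhat)\big|^{-1}\,\delta^{(4)}(\xhat-\yhat)$. On the support of the right-hand side one has $\xhat=\yhat$, so the smooth coefficients in \eqref{plan-Deltapullback} may be evaluated there; since $x^a$ and $y^a$ are the same function, all four Jacobian matrices coincide with $J=\partial x/\partial\xhat$, and the contraction with $\epsilon_{abcd}$ collapses to a determinant, $\epsilon_{abcd}\,J^a_p J^b_q J^c_r J^d_s=\det J\,\epsilon_{pqrs}$. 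For an orientation-preserving $\morph$ (as is appropriate for a transformation-design map respecting the time and space orientations on $\Mhat$ and $\man$) we have $\det J\,\big|\det J\big|^{-1}=1$, so \eqref{plan-Deltapullback} reduces to $\tfrac14\,\epsilon_{pqrs}\,\delta^{(4)}(\xhat-\yhat)\,d\xhat^p\wedge d\xhat^q\wedge d\yhat^r\wedge d\yhat^s=\Deltahat$.

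The main obstacle I anticipate is making the middle step genuinely rigorous rather than formal: one should pair $\morphXY^\pstar\Delta$ with a compactly supported test density and invoke the change-of-variables theorem, rather than manipulate products of distributions and Jacobians symbolically, and one must track the orientation sign carefully so that the two Jacobian contributions cancel to $+1$ and not merely to $\pm1$. A more structural alternative, isolating the same analytic content, is to recognise $\Delta$ as the integral kernel of the identity operator on $2$--forms --- which follows from \eqref{intro_ker_Delta} together with $\epsilon^{cdef}\epsilon_{abcd}=2(\delta^e_a\delta^f_b-\delta^e_b\delta^f_a)$ --- to note that pullback by $\morphXY$ intertwines the integral pairings on $\man$ and $\Mhat$ by change of variables, and hence that $\morphXY^\pstar\Delta$ is again the identity kernel, equal to $\Deltahat$ by uniqueness.
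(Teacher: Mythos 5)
Your proposal is correct, but your primary route differs from the paper's. You argue in coordinates: pull back the form part of $\Delta$ to get four Jacobian factors, convert $\delta^{(4)}(\morph(\xhat)-\morph(\yhat))$ into $\left|\det J\right|^{-1}\delta^{(4)}(\xhat-\yhat)$ by change of variables, evaluate the smooth coefficients on the diagonal where all four Jacobians coincide, and use $\epsilon_{abcd}J^a_pJ^b_qJ^c_rJ^d_s=\det J\,\epsilon_{pqrs}$ so that the two determinant factors cancel up to $\operatorname{sign}(\det J)$. The paper instead works weakly throughout: it pairs $\Delta$ with an arbitrary test form $\alpha$ of bidegree $(2,2)$, shows $\int_{\MX\times\MY}\alpha\wedge\Delta=\int_{\MX}\gamma_{\textup{X}}^\pstar\alpha$ (i.e.\ $\Delta$ is the diagonal-restriction, or identity, kernel), and then uses the commutation $\gamma_{\textup{X}}\circ\morphX=\morphXY\circ\indevice{\gamma}_{\textup{X}}$ together with diffeomorphism invariance of the integral to conclude $\morphXY^\pstar\Delta=\Deltahat$; this is essentially the ``structural alternative'' you sketch in your last paragraph. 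The trade-off: your computation is more concrete and makes the orientation issue explicit (the residual $\det J/\left|\det J\right|$ must be $+1$, i.e.\ $\morph$ orientation-preserving --- a hypothesis the paper's proof also needs, but hides inside the identity $\int_{\MXhat}\morphX^\pstar\beta=\int_{\MX}\beta$), whereas the paper's test-form argument never multiplies distributions by Jacobians and so is rigorous as written and manifestly coordinate-free. Your own caveat about needing to pair against test densities to make the middle step rigorous is exactly the gap that the paper's formulation closes; with that step carried out (or by falling back on your identity-kernel argument), your proof is complete and equivalent in content.
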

\begin{proof}
We show \eqref{gen_ker_phistar_Delta} is true by acting on an
arbitrary test form $\alpha$ in $\MX\times\MY$ to form the integral
$\int_{\MX\times\MY} \alpha\wedge\Delta$.
Note that if $\alpha$ is not of the degree (2,2), i.e.
$\alpha\ne\tfrac14\alpha_{abcd}dx^{ab}\wedge dy^{cd}$ then
$\int_{\MX\times\MY} \alpha\wedge\Delta=0$.

Thus we set $\alpha=\tfrac14\alpha_{abcd}dx^{ab}\wedge dy^{cd}$.
Since $\alpha$ is a test form it has
 compact support,
 so that $\int_{\MX\times\MY} \alpha\wedge\Delta$ is finite.
%[
\begin{align*}
\int_{\MX\times\MY} \alpha\wedge\Delta
&=
\frac1{16} \int_{\MX\times\MY}\alpha_{abcd}(x,y)\,\delta^{(4)}(x-y)\,
dx^{ab}\wedge dy^{cd}\wedge dx^{ef}\wedge dy^{gh}\epsilon_{efgh}
\\&=
\frac1{16} \int_{\MX\times\MY}\alpha_{abcd}(x,y)\,\delta^{(4)}(x-y)\,
dx^{abef}\wedge dy^{0123}\,\epsilon_{efgh}\,\epsilon^{cdgh}
\\&=
\frac1{16} \int_{\MX}\alpha_{abcd}(x,x)
dx^{abef}\,\epsilon_{efgh}\,\epsilon^{cdgh}
\\
&=
\frac1{8} \int_{\MX}\alpha_{abcd}(x,x)
dx^{abef}(\delta_e^c\delta_f^d-\delta_f^c\delta_e^d)
\\&=
\frac1{4} \int_{\MX}\alpha_{abcd}(x,x)
dx^{abcd}=\int_{\MX} \gamma_{\textup{X}}^\pstar\alpha
\end{align*}
%]
where $\gamma_{\textup{X}}:\MX\to(\MX\times\MY)$ with
$\gamma_{\textup{X}}(x)=(x,x)$.
Likewise setting
$\indevice\gamma_{\textup{X}}:\MXhat\to(\MXhat\times\MYhat)$,
$\indevice\gamma_{\textup{X}}(\xhat)=(\xhat,\xhat)$ then
$\gamma_{\textup{X}}\circ\morphX=\morphXY\circ\indevice\gamma_{\textup{X}}$.
Thus
%[
\begin{align*}
\int_{\MXhat\times\MYhat} (\morphXY^\pstar \,\alpha)\wedge\Deltahat
&=
\int_{\MXhat} \indevice\gamma_{\textup{X}}^\pstar \,\morphXY^\pstar\, \alpha
=
\int_{\MXhat} \morphX^\pstar\, \gamma_{\textup{X}}^\pstar \alpha
=
\int_{\MX} \gamma_{\textup{X}}^\pstar \alpha
=
\int_{\MX\times\MY} \alpha\wedge\Delta
\\&=
\int_{\MXhat\times\MYhat} \morphXY^\pstar (\alpha\wedge\Delta)
=
\int_{\MXhat\times\MYhat} (\morphXY^\pstar\, \alpha)\wedge(\morphXY^\pstar\, \Delta)
\end{align*}
%]
Since this is true for all $\alpha$ we have
\eqref{gen_ker_phistar_Delta}.
\end{proof}

%%%%%%%%%%%%%%%%%%%%%%%%%%%%%%%%%%%%%%%%%%%%%%%%%%%%%%%%%%%%%%%%%%%%%%

\begin{lemma}
\label{lm_kappahat_solves_Maxwell}
The transformed kernel $\kappahat$ given
by \eqref{gen_ker_kappahat} satisfies Maxwell's equation
\eqref{intro_max_hat_noJ}.
\end{lemma}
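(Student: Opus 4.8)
The plan is to reduce the lemma to the \emph{operator} constitutive relation $\Psihat\,\Pihat=\Phihat\,\Fhat$ and then appeal to the theorem of Section~\ref{sch_Trans_Opp}. That theorem already shows that any $\Pihat$ obeying \eqref{intro_constitutive_oper_hat} — together with $d\Fhat=0$, which is automatic because $\Fhat=\morph^\pstar\Ffield$ — produces the transformed Maxwell equations \eqref{intro_max_hat_noJ} (away from the excluded plasma resonances, with $\Jhat$ as in \eqref{intro_Psihat_Phihat}). So the substance of the proof is to check that the $\Pihat$ manufactured from $\kappahat$ through \eqref{Trans_ker_Pi} satisfies $\Psihat\Pihat=\Phihat\Fhat$.

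First I would establish the kernel analogue of the Green's-function identity: that $\kappahat$ of \eqref{gen_ker_kappahat} satisfies $\Psihat_{\textup{X}}\,\kappahat=\Phihat_{\textup{X}}\,\Deltahat$, where $\Psihat_{\textup{X}},\Phihat_{\textup{X}}$ denote the operators $\Psihat,\Phihat$ of Section~\ref{sch_Trans_Opp} ``lifted'' to $\MXhat\times\MYhat$ by letting every Hodge dual and pullback act on the $\xhat$-slot only, exactly as $\Psi,\Phi$ were lifted to $\PsiX,\PhiX$ in \eqref{Kernel_PsiX_PhiX} — so that $\ZhatX^{-1}$ of \eqref{gen_ker_Z} is the $\xhat$-slot lift of $\Zhat^{-1}$, which is what the component statement \eqref{gen_ZhatX_components} records. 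The verification is just the chain of manipulations in the proof of that theorem, now carried out on the product manifold: the pairs $\StarXhat\StarXhat^{-1}$, $\StarX^{-1}\StarX$ and $\morphXY^\mstar\morphXY^\pstar$ collapse so that $\Psihat_{\textup{X}}\,\ZhatX^{-1}=\morphXY^\pstar\,\PsiX\,\morphXY^\mstar$; feeding this and $\morphXY^\mstar\morphXY^\pstar=\Id$ into \eqref{gen_ker_kappahat} gives $\Psihat_{\textup{X}}\kappahat=\morphXY^\pstar\PsiX(\epsilon_0\Delta+\kappa)-\epsilon_0\Psihat_{\textup{X}}\Deltahat$; and then the original Green's equation \eqref{intro_ker_green}, $\PsiX\kappa=\PhiX\Delta$, together with $\Deltahat=\morphXY^\pstar\Delta$ from Lemma~\ref{lm_Delta_preserved}, rewrites the right-hand side as precisely $\big(\epsilon_0\morphXY^\pstar\PsiX\morphXY^\mstar-\epsilon_0\Psihat_{\textup{X}}+\morphXY^\pstar\PhiX\morphXY^\mstar\big)\Deltahat=\Phihat_{\textup{X}}\Deltahat$.

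Next I would transfer this kernel identity into the operator relation for $\Pihat$. Because $\Psihat_{\textup{X}}$ and $\Phihat_{\textup{X}}$ are built only from operations in the $\xhat$-variable, they pass through the $\MYhat$-integral in \eqref{Trans_ker_Pi}; hence applying $\Psihat$ to $\Pihat$ gives the integral against $\Fhat$ of $\Psihat_{\textup{X}}\kappahat=\Phihat_{\textup{X}}\Deltahat$, i.e.\ $\Phihat$ applied to the integral against $\Fhat$ of $\Deltahat$. Since $\Deltahat$ is the identity kernel on $\Mhat$ — substituting it for $\kappahat$ in \eqref{Trans_ker_Pi} returns $\Fhat$, just as $\Delta$ returns $\Ffield$ in \eqref{intro_ker_Pi} — that integral is simply $\Fhat$. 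Therefore $\Psihat\Pihat=\Phihat\Fhat$, and the theorem of Section~\ref{sch_Trans_Opp} (whose hypothesis that $\Psi$ be Killing and Closed holds because $\Psi$ is assembled from $L_t$) completes the proof, giving \eqref{intro_max_hat_noJ}.

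I expect the only delicate point to be bookkeeping rather than analysis: one must fix the right notion of ``$\xhat$-slot lift'' so that the purely algebraic cancellations behind the operator theorem recur verbatim on $\MXhat\times\MYhat$ — in particular that $\ZhatX^{-1}$ in \eqref{gen_ker_Z} genuinely \emph{is} the lift of $\Zhat^{-1}$, and that $\morphXY^\mstar\morphXY^\pstar=\Id$ holds on forms of bidegree $(2,2)$. The remaining inputs — commuting the $\xhat$-operators past the $\MYhat$-integral, and the reproducing property of $\Deltahat$ — are routine, and the one genuinely external ingredient, $\Deltahat=\morphXY^\pstar\Delta$, is exactly Lemma~\ref{lm_Delta_preserved}.
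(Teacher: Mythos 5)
Your argument is essentially correct, but it takes a genuinely different route from the paper's. You reduce the lemma to the operator picture: you first establish the transformed Green's identity $\Psihat_{\textup{X}}\,\kappahat=\Phihat_{\textup{X}}\,\Deltahat$ (your algebra here is sound: $\Psihat_{\textup{X}}\ZhatX^{-1}=\morphXY^\pstar\PsiX\morphXY^\mstar$, then \eqref{intro_ker_green} and $\Delta=\morphXY^\mstar\Deltahat$ from Lemma \ref{lm_Delta_preserved} give exactly the X-slot lift of the definition of $\Phihat$ in \eqref{intro_Psihat_Phihat}), then pass it under the $\MYhat$-integral and use the reproducing property of $\Deltahat$ to get $\Psihat\,\Pihat=\Phihat\,\Fhat$, and finally invoke the operator theorem. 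The paper instead never touches the operator relation: it pulls the design-space Maxwell equation itself back through the kernel integral, writing $\epsilon_0\FX+\Pi=\int_{\MY}(\epsilon_0\Delta+\kappa)\wedge\PiY^\pstar\FY$, inserting $\morphX^\mstar\morphX^\pstar$ and $\StarXhat\StarXhat^{-1}$, and using the one technical fact that $\ZhatX^{-1}(\alpha\wedge\beta)=\ZhatX^{-1}\alpha\wedge\beta$ when $\beta$ contains only $d\yhat$'s, together with Lemma \ref{lm_Delta_preserved}, to land directly on $d\,\Starhat(\epsilon_0\FhatX+\Pihat)=\speedofL^{-2}\Starhat\Jhat$. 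The trade-offs are real: your route exposes a clean intermediate statement (the covariance of the Green's equation) and makes the operator/kernel parallel explicit, but it only applies when $\kappa$ is the Green's kernel of a Killing, closed operator pair $(\Psi,\Phi)$, it inherits the plasma-resonance caveat needed to strip $\Psi\morph^\mstar$ off \eqref{intro_max_hat} to reach \eqref{intro_max_hat_noJ}, and it requires you to fix precisely the X-slot lifts of $\Psihat,\Phihat$ (the bookkeeping you flag). The paper's direct computation needs none of this: it holds for an arbitrary linear-response kernel $\kappa$ — which is the point of the kernel representation being ``more general'' than the operator one — and obtains \eqref{intro_max_hat_noJ} exactly rather than modulo injectivity of $\Psi\morph^\mstar$. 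So your proof is acceptable within the paper's standing assumptions, but it is strictly narrower in scope than the proof given in the Appendix.
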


\begin{proof}
There is a slight subtlety with regard to the manifold that $\Ffield$ lies
in. 
Let $\FX=\tfrac12 \Ffield_{ab}(x) dx^{ab}$
 and $\FY=\tfrac12 \Ffield_{ab}(y) dy^{ab}$
 be the same electromagnetic 2--form on $\MX$ and $\MY$
 respectively. 
This gives $\FX=\int_{\MY}\Delta\wedge\PiY^\pstar\FY$. 
Likewise with $\FhatX$ and $\FhatY$.

If $\alpha$ and $\beta$ are forms on $\MX\times\MY$,
 but $\beta$ only contains $dy$,
 then from \eqref{gen_ker_Z}
%[
\begin{align*}
\ZhatX^{-1} (\alpha\wedge\beta)
&=
\StarXhat^{-1}\,\morphXY^\pstar\,\StarX\,\morphXY^\mstar
(\alpha\wedge\beta)
=
\StarXhat^{-1}\,\morphXY^\pstar\,\StarX
(\morphXY^\mstar\alpha\wedge\morphXY^\mstar\beta)
\\&=
\StarXhat^{-1}\,\morphXY^\pstar
(\StarX\,\morphXY^\mstar\alpha\wedge\morphXY^\mstar\beta)
=
\StarXhat^{-1}
(\morphXY^\pstar\,\StarX\,\morphXY^\mstar\alpha\wedge
\morphXY^\pstar\,\morphXY^\mstar\beta)
\\&=
\StarXhat^{-1}
(\morphXY^\pstar\,\StarX\,\morphXY^\mstar\alpha\wedge
\beta)
=
\StarXhat^{-1}
\morphXY^\pstar\,\StarX\,\morphXY^\mstar\alpha\wedge
\beta
=
\ZhatX^{-1}\alpha\wedge
\beta
\end{align*}
%]
Hence we have
%[
\begin{align*}
  \speedofL^{-2}\,\Starhat\,\Jhat
&=
  \speedofL^{-2}\,\morphX^\pstar \,\Star\,\Jvec
=
  \morphX^\pstar ~ d ~ \StarX ~ (\epsilon_0\FX+\Pi)
=
  d ~ \morphX^\pstar
  \StarX
  \int_{\MY}
  \left( \epsilon_0 \Delta + \kappa \right) 
  \wedge \PiY^\pstar
  \FY
\\
&=
  d ~ \morphX^\pstar
  \StarX
  \morphX^\mstar
  \morphX^\pstar
  \int_{\MY} 
  \left( \epsilon_0 \Delta + \kappa \right)
  \wedge\PiY^\pstar 
  \FY
\\&=
  d ~ \morphX^\pstar
  \StarX
  \morphX^\mstar
  \int_{\MYhat} 
  \morphXY^\pstar
  \left(
    \left( \epsilon_0 \Delta + \kappa \right) 
    \wedge \PiY^\pstar 
    FY
  \right)
\\&=
d ~ \StarXhat ~ \StarXhat^{-1} \morphX^\pstar ~ \StarX ~ \morphX^\mstar\
\int_{\MYhat} \morphXY^\pstar ~  \left( (\epsilon_0\,\Delta+\kappa)
\wedge\PiY^\pstar \FY \right) 
\\&=
d ~ \StarXhat ~ Z^{-1}\,
\int_{\MYhat} \morphXY^\pstar\
 \left( (\epsilon_0\,\Delta+\kappa)\wedge\PiY^\pstar \FY \right) 
\\&=
d ~ \StarXhat\
\int_{\MYhat} \ZhatX^{-1} \morphXY^\pstar\
 \left( (\epsilon_0\,\Delta+\kappa)\wedge\PiY^\pstar \FY \right) 
\\&=
d ~ \StarXhat\
\int_{\MYhat}  
  \left( \ZhatX^{-1} \morphXY^\pstar 
    \left( \epsilon_0 \Delta + \kappa \right) 
  \right)
  \wedge
  \left( \morphXY^\pstar ~ \PiY^\pstar \FY \right) 
\\&=
d ~ \StarXhat\
\int_{\MYhat} (\epsilon_0\Deltahat+\kappahat)\wedge\
 \left( \PiYhat^\pstar  ~ \morphY^\pstar ~ \FY \right) 
\\&=
  d ~ \StarXhat\
  \int_{\MYhat} 
    \left(\epsilon_0\,\Deltahat+\kappahat\right)\wedge\
    \left(\PiYhat^\pstar  ~ \FhatY \right) 
=
d ~ \StarXhat ~ \left( \epsilon_0 \FhatX + \Pihat \right)
\end{align*}
%]
\end{proof}

%%%%%%%%%%%%%%%%%%%%%%%%%%%%%%%%%%%%%%%%%%%%%%%%%%%%%%%%%%%%%%%%%%%%%%

\section{Derivation of the transformation $(t,x)\to(\tHat,\xhat)$}

Here we demonstrate the formulae \eqref{tth_Lt} and
\eqref{tth_Lt_sqr}. Generalise the transformation so that
%[
\begin{equation}
\begin{aligned}
&\morph^\pstar t=t(\tHat,\xhat)\,,\quad \morph^\pstar x=x(\tHat,\xhat)
\,,\quad \morph^\pstar y = \indevice{y}
\quadand \morph^\pstar z = \indevice{z}
\\
\text{and inverse relations}\qquad
&\morph^\mstar \tHat=\tHat(t,x)\,,\quad\morph^\mstar \xhat=\xhat(t,x)
\,,\quad \morph^\mstar \indevice{y} = y
\quadand \morph^\mstar \indevice{z} = z
\end{aligned}
\label{intro_diffeo_coords}
\end{equation}
%]
This implies
%[
\begin{align}
\morph_\pstar^{-1} \partial_{x}
= (\partial_x\xhat)\, \partial_{\xhat} +
(\partial_x\tHat)\, \partial_{\tHat}
\qquadand
\morph_\pstar^{-1} \partial_{t}
= (\partial_{t} \xhat)\, \partial_{\xhat} +
(\partial_{t} {\tHat})\, \partial_{\tHat}
\label{intro_diffeo_partials_unhat}
\end{align}
%]
and
%[
\begin{align}
\morph_\pstar\partial_{\xhat} = (\partial_{\xhat}{x})\, \partial_{x} +
(\partial_{\xhat}{t})\, \partial_{t}
\qquadand
\morph_\pstar\partial_{\tHat} = (\partial_{\tHat}{x})\, \partial_{x} +
(\partial_{\tHat} t)\, \partial_{t}
\label{intro_diffeo_partials_hat}
\end{align}
%]
This gives the following
\begin{lemma}
$\morph^\pstar L_t\morph^\mstar$ and $\morph^\pstar L_t^2\morph^\mstar$ are given by
%[
\begin{align}
\morph^\pstar L_t\morph^\mstar
&=
(\partial_t\tHat) L_\tHat + (\partial_t \xhat) L_\xhat
+
(\partial_\tHat\partial_t\tHat) d\tHat\wedge i_\tHat
+
(\partial_\tHat\partial_t\xhat) d\tHat\wedge i_\xhat
+
(\partial_\xhat\partial_t\tHat) d\xhat\wedge i_\tHat
+
(\partial_\xhat\partial_t\xhat) d\xhat\wedge i_\xhat
\label{intro_Lt}
\end{align}
%]
and
%[
\begin{equation}
\begin{aligned}
\morph^\pstar & L_t^2\morph^\mstar
=
\\&
  \left( \partial_t^2 \tHat \right) L_\tHat + \left( \partial_t^2 \xhat \right) L_\xhat
 \quad
 +
  \left( \partial_t \tHat \right)^2 L_\tHat^2
 \quad
 +
  2\left( \partial_t\tHat \right)\left( \partial_t \xhat \right) L_\tHat L_\xhat
 \quad
 +
  \left( \partial_t \xhat \right)^2 L_\xhat^2
\\&+
  2\left( \partial_t \tHat \right)\left( \partial_\tHat\partial_t\tHat \right)
  d\tHat\!\wedge\! i_\tHat L_\tHat
 \quad
 +
  2\left( \partial_t \tHat \right)\left( \partial_\tHat\partial_t\xhat \right)
  d\tHat\!\wedge\! i_\xhat L_\tHat
 \quad
 +
  2\left( \partial_t \tHat \right)\left( \partial_\xhat\partial_t\tHat \right)
  d\xhat\!\wedge\! i_\tHat L_\tHat
\\&+
  2\left( \partial_t \tHat \right)\left( \partial_\xhat\partial_t\xhat \right)
  d\xhat\!\wedge\! i_\xhat L_\tHat
 \quad
 +
  2\left( \partial_t \xhat \right)\left( \partial_\tHat\partial_t\tHat \right)
  d\tHat\!\wedge\! i_\tHat L_\xhat
 \quad
 +
  2\left( \partial_t \xhat \right)\left( \partial_\tHat\partial_t\xhat \right)
  d\tHat\!\wedge\! i_\xhat L_\xhat
\\&+
2\left( \partial_t \xhat \right)\left( \partial_\xhat\partial_t\tHat \right)
d\xhat\!\wedge\! i_\tHat L_\xhat
+
2\left( \partial_t \xhat \right)\left( \partial_\xhat\partial_t\xhat \right)
d\xhat\!\wedge\! i_\xhat L_\xhat
\\&+
  \left(
      \left( \partial_t\partial_\tHat\partial_t\tHat \right)
    + \left( \partial_\tHat\partial_t\tHat \right)^2
    + \left( \partial_\tHat\partial_t\xhat \right)
        \left( \partial_\xhat\partial_t\tHat \right)
  \right)
d\tHat\wedge i_\tHat
\\&+
  \left(
    \left( \partial_t\partial_\tHat\partial_t\xhat \right)
    + \left( \partial_\tHat\partial_t\tHat \right)
        \left( \partial_\tHat\partial_t\xhat \right)
    + \left( \partial_\tHat\partial_t\xhat \right)
        \left( \partial_\xhat\partial_t\xhat \right)
  \right)
  d\tHat\wedge i_\xhat
\\&+
  \left(
    \left( \partial_t\partial_\xhat\partial_t\tHat \right)
    + \left( \partial_\xhat\partial_t\tHat \right)
         \left( \partial_\tHat\partial_t\tHat \right)
    + \left( \partial_\xhat\partial_t\xhat \right)
         \left( \partial_\xhat\partial_t\tHat \right)
  \right)
  d\xhat\wedge i_\tHat
\\&+
  \left(
    \left( \partial_t \partial_\xhat\partial_t\xhat \right)
    + \left( \partial_\xhat\partial_t\tHat \right)
         \left( \partial_\tHat\partial_t\xhat \right)
    + \left( \partial_\xhat\partial_t\xhat \right)^2
  \right)
  d\xhat\wedge i_\xhat
\\&+
  2 
  \left( 
    \left( \partial_\tHat\partial_t\tHat \right)
    + \left( \partial_\xhat\partial_t\xhat \right)
    - \left( \partial_\tHat\partial_t\xhat \right)
    + \left( \partial_\xhat\partial_t\tHat \right)
  \right) 
  d\tHat\wedge d\xhat\wedge i_\xhat\,i_\tHat
\end{aligned}
\label{intro_Lt_sqr}
\end{equation}
%]
\end{lemma}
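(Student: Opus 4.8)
The plan is to reduce both formulae to the naturality of the Lie derivative under the diffeomorphism $\morph$. For any vector field $V$ on $\man$ one has the operator identity $\morph^\pstar\, L_V\, \morph^\mstar = L_{(\morph^{-1})_\pstar V}$ acting on forms over $\Mhat$, which is immediate from $\morph^\pstar(L_V\alpha)=L_{(\morph^{-1})_\pstar V}(\morph^\pstar\alpha)$ together with $\morph^\mstar=(\morph^\pstar)^{-1}$. Reading the components of $(\morph^{-1})_\pstar\partial_t$ off \eqref{intro_diffeo_partials_unhat} gives $\morph^\pstar L_t\morph^\mstar = L_W$ with $W=(\partial_t\tHat)\,\partial_\tHat+(\partial_t\xhat)\,\partial_\xhat$, the coefficients being understood as functions of $(\tHat,\xhat)$ via the inverse relations in \eqref{intro_diffeo_coords}.

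First I would establish \eqref{intro_Lt}. Cartan's identity gives $L_{fV}=d\,i_{fV}+i_{fV}\,d=f\,L_V+df\wedge i_V$ for a function $f$ and a vector field $V$, and the Lie derivative is additive in its vector argument; hence $L_W=(\partial_t\tHat)L_\tHat+(\partial_t\xhat)L_\xhat+d(\partial_t\tHat)\wedge i_\tHat+d(\partial_t\xhat)\wedge i_\xhat$. Since $\partial_t\tHat$ and $\partial_t\xhat$ depend only on $t$ and $x$, and hence only on $\tHat$ and $\xhat$, their differentials have no $d\yhat$ or $d\zhat$ components; writing $d(\partial_t\tHat)=(\partial_\tHat\partial_t\tHat)\,d\tHat+(\partial_\xhat\partial_t\tHat)\,d\xhat$ and likewise for $d(\partial_t\xhat)$, and collecting terms, yields precisely \eqref{intro_Lt}.

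For \eqref{intro_Lt_sqr} the key observation is that, inserting $\morph^\mstar\morph^\pstar=\mathrm{id}$, one has $\morph^\pstar L_t^2\morph^\mstar=(\morph^\pstar L_t\morph^\mstar)^2=(L_W)^2$, so it suffices to square the operator \eqref{intro_Lt}, which I write schematically as $L_W=A\,L_\tHat+B\,L_\xhat+C\,d\tHat\wedge i_\tHat+D\,d\tHat\wedge i_\xhat+E\,d\xhat\wedge i_\tHat+G\,d\xhat\wedge i_\xhat$. The expansion rests on three families of elementary identities: (i) since $\tHat$ and $\xhat$ are coordinate functions, $L_\tHat$ and $L_\xhat$ annihilate $d\tHat$ and $d\xhat$ and commute with $i_\tHat$ and $i_\xhat$, hence with each of the four ``algebraic'' operators $d\tHat\wedge i_\tHat$, $d\tHat\wedge i_\xhat$, $d\xhat\wedge i_\tHat$, $d\xhat\wedge i_\xhat$; (ii) for a function $f$, $L_\tHat(f\alpha)=(\partial_\tHat f)\alpha+f\,L_\tHat\alpha$, and similarly for $L_\xhat$, so moving a Lie derivative past a coefficient of the inner factor generates extra terms with prefactors such as $A\,\partial_\tHat A$, $B\,\partial_\xhat A$ or $A\,\partial_\tHat C$; (iii) from $i_\tHat d\tHat=i_\xhat d\xhat=1$, $i_\tHat d\xhat=i_\xhat d\tHat=0$ and $d\tHat\wedge d\tHat=d\xhat\wedge d\xhat=0$ one obtains the multiplication table of the algebraic operators, for instance $(d\tHat\wedge i_\tHat)^2=d\tHat\wedge i_\tHat$, $(d\tHat\wedge i_\xhat)(d\xhat\wedge i_\tHat)=d\tHat\wedge i_\tHat-d\tHat\wedge d\xhat\wedge i_\xhat\,i_\tHat$ and $(d\tHat\wedge i_\tHat)(d\xhat\wedge i_\xhat)=d\tHat\wedge d\xhat\wedge i_\xhat\,i_\tHat$, the cross products of this kind being the only source of the genuinely new term $d\tHat\wedge d\xhat\wedge i_\xhat\,i_\tHat$ in \eqref{intro_Lt_sqr}. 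Collecting contributions to each monomial and then applying the chain rule $\partial_t=(\partial_t\tHat)\partial_\tHat+(\partial_t\xhat)\partial_\xhat$ to identify combinations like $A\,\partial_\tHat A+B\,\partial_\xhat A=\partial_t^2\tHat$ and $\partial_t\partial_\tHat\partial_t\tHat+(\partial_\tHat\partial_t\tHat)^2+(\partial_\tHat\partial_t\xhat)(\partial_\xhat\partial_t\tHat)=\partial_\tHat\partial_t^2\tHat$ assembles the coefficients as stated. Finally the main-text formulae \eqref{tth_Lt} and \eqref{tth_Lt_sqr} follow by specialising to $t=\tHat$, i.e. $\partial_t\tHat\equiv1$ and $\partial_x\tHat\equiv0$, which removes every term built from derivatives of $\tHat$.

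The step I expect to be the main obstacle is the bookkeeping in squaring \eqref{intro_Lt}: there are thirty-six operator products to keep track of, the coefficient functions must be carried past the Lie derivatives in the right order, and the cross products among the algebraic operators must be evaluated with the correct signs. In particular, the coefficient (and sign) of the $d\tHat\wedge d\xhat\wedge i_\xhat\,i_\tHat$ contribution, together with the reassembly of the third-order coefficients of $d\tHat\wedge i_\tHat$, $d\tHat\wedge i_\xhat$, $d\xhat\wedge i_\tHat$ and $d\xhat\wedge i_\xhat$ via repeated use of the chain rule from \eqref{intro_diffeo_partials_unhat}, are where mistakes are easiest to make; everything else is routine once identities (i)--(iii) are at hand.
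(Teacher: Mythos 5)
Your proposal is correct and follows essentially the same route as the paper's appendix proof: expand $\morph^\pstar L_t\morph^\mstar$ via $L_{fV}=f\,L_V+df\wedge i_V$ applied to $(\partial_t\tHat)\,\partial_\tHat+(\partial_t\xhat)\,\partial_\xhat$, then obtain \eqref{intro_Lt_sqr} by squaring \eqref{intro_Lt}, using the commutation of $L_\tHat,L_\xhat$ with the algebraic operators and the same multiplication table for $d\tHat\wedge i_\tHat$, $d\tHat\wedge i_\xhat$, $d\xhat\wedge i_\tHat$, $d\xhat\wedge i_\xhat$. The only (immaterial) difference is bookkeeping: the paper carries $t$-derivatives of the coefficients directly via $L_t f=\partial_t f$, whereas you keep hatted partials and reassemble them through the chain rule $\partial_t=(\partial_t\tHat)\partial_\tHat+(\partial_t\xhat)\partial_\xhat$ at the end.
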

\begin{proof} Dropping the $\morph^\pstar$ and $\morph^\mstar$ then
\eqref{intro_Lt} follows from
%[
\begin{align*}
L_t
&=
L_{\partial_{t}}
=
L_{(\partial_t{\tHat})\, \partial_{\tHat}}
+
L_{(\partial_t{\xhat})\, \partial_{\xhat}}
=
(\partial_t{\tHat})\, L_{\partial_{\tHat}}
+
d (\partial_t{\tHat})\wedge i_{\partial_{\tHat}}
+
(\partial_t{\xhat})\, L_{\partial_{\xhat}}
+
d (\partial_t{\xhat})\wedge i_{\partial_{\xhat}}
\end{align*}
%]
Taking $L_t$ of the first and second terms of \eqref{intro_Lt} we have
%[
\begin{align*}
L_t \left( (\partial_t \tHat) L_\tHat \right) 
&=
(\partial_t^2 \tHat) L_\tHat
~~+
(\partial_t \tHat)L_t L_\tHat
\\&=
(\partial_t^2 \tHat) L_\tHat
\\
&\quad
~~+
  (\partial_t \tHat) 
  \Big[ %\left[
    (\partial_t\tHat) L_\tHat
    ~~+
    (\partial_t \xhat) L_\xhat
    ~~+
    (\partial_\tHat\partial_t\tHat) d\tHat\wedge i_\tHat
  %\right.
\\
&\qquad
  %\left.
    ~~+
    (\partial_\tHat\partial_t\xhat) d\tHat\wedge i_\xhat
    ~~+
    (\partial_\xhat\partial_t\tHat) d\xhat\wedge i_\tHat
    ~~+
    (\partial_\xhat\partial_t\xhat) d\xhat\wedge i_\xhat
  \Big] %\right]
  L_\tHat
\\&=
(\partial_t^2 \tHat) L_\tHat
~~+
(\partial_t \tHat)^2 L_\tHat^2
~~+
(\partial_t\tHat)(\partial_t \xhat) L_\tHat L_\xhat
~~+
(\partial_t \tHat)(\partial_\tHat\partial_t\tHat)
d\tHat\!\wedge\! i_\tHat L_\tHat
\\&\qquad
~~+
(\partial_t \tHat)(\partial_\tHat\partial_t\xhat)
d\tHat\!\wedge\! i_\xhat L_\tHat
~~+
(\partial_t \tHat)(\partial_\xhat\partial_t\tHat)
d\xhat\!\wedge\! i_\tHat L_\tHat
~~+
(\partial_t \tHat)(\partial_\xhat\partial_t\xhat)
d\xhat\!\wedge\! i_\xhat L_\tHat
\end{align*}
%]
and
%[
\begin{align*}
L_t \left( (\partial_t \xhat) L_\xhat \right) 
&=
(\partial_t^2 \xhat) L_\xhat
+
(\partial_t \xhat)L_t L_\xhat
\\&=
(\partial_t^2 \xhat) L_\xhat
+
(\partial_t \xhat) 
  \Big[ % \left( 
    (\partial_t\tHat) L_\tHat 
    ~~+ 
    (\partial_t \xhat) L_\xhat
    ~~+
    (\partial_\tHat\partial_t\tHat) d\tHat\wedge i_\tHat
\\
&\qquad\qquad\qquad\qquad
    +
    (\partial_\tHat\partial_t\xhat) d\tHat\wedge i_\xhat
    ~+
    (\partial_\xhat\partial_t\tHat) d\xhat\wedge i_\tHat
    ~+
    (\partial_\xhat\partial_t\xhat) d\xhat\wedge i_\xhat
  \Big] %\right) 
  L_\xhat
\\&=
(\partial_t^2 \xhat) L_\xhat
+
(\partial_t\tHat)(\partial_t \xhat) L_\tHat L_\xhat
+
(\partial_t \xhat)^2 L_\xhat^2
+
(\partial_t \xhat)(\partial_\tHat\partial_t\tHat)
d\tHat\!\wedge\! i_\tHat L_\xhat
\\&\qquad
+
(\partial_t \xhat)(\partial_\tHat\partial_t\xhat)
d\tHat\!\wedge\! i_\xhat L_\xhat
+
(\partial_t \xhat)(\partial_\xhat\partial_t\tHat)
d\xhat\!\wedge\! i_\tHat L_\xhat
+
(\partial_t \xhat)(\partial_\xhat\partial_t\xhat)
d\xhat\!\wedge\! i_\xhat L_\xhat
\end{align*}
%]

Now
%[
\begin{align*}
&L_t  \left( (\partial_\tHat\partial_t\tHat) d\tHat\wedge i_\tHat \right) 
=
(\partial_t\partial_\tHat\partial_t\tHat) d\tHat\wedge i_\tHat
~~+
(\partial_\tHat\partial_t\tHat) L_t \left( d\tHat\wedge i_\tHat \right) 
\\&=
(\partial_t\partial_\tHat\partial_t\tHat) d\tHat\wedge i_\tHat
\\&\quad+
(\partial_\tHat\partial_t\tHat)
  \Big[
      (\partial_t\tHat) L_\tHat
    + (\partial_t \xhat) L_\xhat
    + (\partial_\tHat\partial_t\tHat) d\tHat\wedge i_\tHat
\\
&\qquad\qquad\qquad
    + (\partial_\tHat\partial_t\xhat) d\tHat\wedge i_\xhat
    + (\partial_\xhat\partial_t\tHat) d\xhat\wedge i_\tHat
    + (\partial_\xhat\partial_t\xhat) d\xhat\wedge i_\xhat
  \Big] 
  \left( d\tHat\wedge i_\tHat \right) 
\end{align*}
%]
The first term on the right hand side of the above is simply included
in \eqref{intro_Lt_sqr}. The next two terms double the appropriate
cross terms since
$L_\tHat d\xhat\wedge i_\xhat = d\xhat\wedge i_\xhat L_\tHat$.
The last for terms can be calculated according to the following table.
%[
\begin{align*}
\begin{array}{|cc|c|c|c|c|}
\hline
&&\multicolumn{4}{c|}{\text{Second}}
\\
&& d\tHat\wedge i_\tHat & d\tHat\wedge i_\xhat & d\xhat\wedge i_\tHat & d\xhat\wedge i_\xhat
\\\hline
& d\tHat\wedge i_\tHat & d\tHat\wedge i_\tHat & d\tHat\wedge i_\xhat & 0 &
   d\tHat\wedge d\xhat\wedge i_\xhat\, i_\tHat
\\\cline{2-6}
& d\tHat\wedge i_\xhat & 0 & 0 & d\tHat\wedge i_\tHat -  d\tHat\wedge d\xhat\wedge i_\xhat\, i_\tHat
   & d\tHat \wedge i_\xhat
\\\cline{2-6}
\raisebox{0pt}[0pt]{\rotatebox{90}{\text{First}}}
& d\xhat\wedge i_\tHat & d\xhat\wedge i_\tHat & d\xhat\wedge i_\xhat - d\tHat\wedge d\xhat\wedge i_\xhat\,i_\tHat
  & 0 & 0
\\\cline{2-6}
& d\xhat\wedge i_\xhat & d\tHat\wedge d\xhat\wedge i_\xhat\,i_\tHat & 0 & d\xhat\wedge i_\tHat & d\xhat\wedge i_\xhat
\\\hline
\end{array}
\end{align*}
%]
which gives
%[
\begin{align*}
\Big(&
(\partial_\tHat\partial_t\tHat) d\tHat\wedge i_\tHat
+
(\partial_\tHat\partial_t\xhat) d\tHat\wedge i_\xhat
+
(\partial_\xhat\partial_t\tHat) d\xhat\wedge i_\tHat
+
(\partial_\xhat\partial_t\xhat) d\xhat\wedge i_\xhat
\Big)^2
\\&=
(\partial_\tHat\partial_t\tHat)^2 d\tHat\wedge i_\tHat
+
(\partial_\tHat\partial_t\tHat)(\partial_\tHat\partial_t\xhat)
d\tHat\wedge i_\xhat
+
(\partial_\tHat\partial_t\tHat)(\partial_\xhat\partial_t\xhat)
d\tHat\wedge d\xhat\wedge i_\xhat\, i_\tHat
\\&\quad +
(\partial_\tHat\partial_t\xhat)(\partial_\xhat\partial_t\tHat)
 \left( d\tHat\wedge i_\tHat -  d\tHat\wedge d\xhat\wedge i_\xhat\,i_\tHat \right) 
+
(\partial_\tHat\partial_t\xhat)(\partial_\xhat\partial_t\xhat)
d\tHat \wedge i_\xhat
\\&\quad +
(\partial_\xhat\partial_t\tHat) (\partial_\tHat\partial_t\tHat)
d\xhat\wedge i_\tHat
+
(\partial_\xhat\partial_t\tHat) (\partial_\tHat\partial_t\xhat)
 \left( d\xhat\wedge i_\xhat - d\tHat\wedge d\xhat\wedge i_\xhat\,i_\tHat \right) 
\\&\quad +
(\partial_\xhat\partial_t\xhat)(\partial_\tHat\partial_t\tHat)
d\tHat\wedge d\xhat\wedge i_\xhat\,i_\tHat
+
(\partial_\xhat\partial_t\xhat)(\partial_\xhat\partial_t\tHat)
d\xhat\wedge i_\tHat
+
(\partial_\xhat\partial_t\xhat)^2 d\xhat\wedge i_\xhat
\\&=
 \left( (\partial_\tHat\partial_t\tHat)^2
+ (\partial_\tHat\partial_t\xhat)(\partial_\xhat\partial_t\tHat) \right) 
d\tHat\wedge i_\tHat
+
 \left( 
(\partial_\tHat\partial_t\tHat)(\partial_\tHat\partial_t\xhat)
+(\partial_\tHat\partial_t\xhat) (\partial_\xhat\partial_t\xhat)
 \right) 
d\tHat\wedge i_\xhat
\\&\quad+
 \left( 
(\partial_\xhat\partial_t\tHat) (\partial_\tHat\partial_t\tHat)
+
(\partial_\xhat\partial_t\xhat)(\partial_\xhat\partial_t\tHat)
 \right) 
d\xhat\wedge i_\tHat
+
 \left( 
(\partial_\xhat\partial_t\tHat) (\partial_\tHat\partial_t\xhat)
+(\partial_\xhat\partial_t\xhat)^2
 \right) 
d\xhat\wedge i_\xhat
\\&\quad+
2 \left( 
(\partial_\tHat\partial_t\tHat)(\partial_\xhat\partial_t\xhat)
-(\partial_\tHat\partial_t\xhat)(\partial_\xhat\partial_t\tHat)
 \right) 
d\tHat\wedge d\xhat\wedge i_\xhat\,i_\tHat
\end{align*}
%]
\end{proof}

Equations \eqref{tth_Lt}-\ref{tth_Lt_sqr} follow by setting
$(\partial_t\tHat)=1$ and $(\partial_x\tHat)=0$.

For the more general transformation \eqref{intro_diffeo_coords} then
$\Phi$ is transformed using \eqref{intro_Psihat_Phihat} with
\begin{align}
\morph^\pstar\,\Phi\,\morph^\mstar
=
\Plasf^2 \Big(
(\partial_\tHat t)(\partial_t\tHat) d\tHat\wedge i_\tHat
+
(\partial_\tHat t)(\partial_t\xhat) d\tHat\wedge i_\xhat
+
(\partial_\xhat t)(\partial_t\tHat) d\xhat\wedge i_\tHat
+
(\partial_\xhat t)(\partial_t\xhat) d\xhat\wedge i_\xhat\Big)
~.
\label{Appx_Phibar}
\end{align}
%]

%
%=======================================================================
\end{document}